\newif\ifAMS
\medskip\centerline{\fbox{\emph{Haskell program}}}\begin{alltt}}%
\medskip\centerline{\fbox{\emph{End of Haskell program}}}~\\\medskip}
\newcommand{\codeDir}{.}
\newcommand{\inputminted}[2]{~\\\medskip\centerline{\fbox{\emph{Haskell program}}}\begin{alltt}\input{#2.txt}\end{alltt}\medskip\centerline{\fbox{\emph{End of Haskell program}}}~\\\medskip}
\newtheorem{prop}{Proposition}
\newtheorem{theo}{Theorem}
\newtheorem{corollary}{Corollary}
\newcommand{\nat}{\ensuremath{\mathbb{N}}}
\newcommand{\B}{\mathcal{B}}
\newcommand{\D}{\mathcal{D}}
\newcommand{\calH}{\mathcal{H}}
\newcommand{\K}{\mathcal{K}}
\newcommand{\calL}{\mathcal{L}}
\newcommand{\M}{\mathcal{M}}
\newcommand{\N}{\mathcal{N}}
\newcommand{\T}{\mathcal{T}}
\newcommand{\W}{\mathcal{W}}
\newcommand{\Z}{\mathcal{Z}}
\renewcommand{\zerodot}{\mbox{$0\hspace*{-4pt}\raisebox{.5pt}{$\cdot$}$\,}}
\newcommand{\Loo}{L_\infty}
\newcommand{\calLoo}{\calL_\infty}
\newcommand{\DLoo}{\ensuremath{`D_{\Loo}}}
\newcommand{\rLoo}{\rho_{L_\infty}}
\newcommand{\Moo}{M_\infty}
\newcommand{\DLm}{\ensuremath{`D_{L_m}}}
\newcommand{\lf}[2]{{\raisebox{.8pc}{\tiny \xymatrix@=.05pc{&{#2}\ar@{-}[dl]\\{#1}}}}}
\newcommand{\ri}[2]{{\raisebox{.8pc}{\tiny \xymatrix@=.05pc{{#1}\ar@{-}[dr]\\
        &{#2}}}}}
\newcommand{\ltobw}{\ensuremath{\mathsf{LtoBw}}}
\newcommand{\bwtol}{\ensuremath{\mathsf{BwtoL}}}
\newcommand{\bwtobz}{\ensuremath{\mathsf{BwToBz}}}
\newcommand{\bztobw}{\ensuremath{\mathsf{BzToBw}}}
\newcommand{\ltobz}{\ensuremath{\mathsf{LToBz}}}
\newcommand{\bztol}{\ensuremath{\mathsf{BzToL}}}
\newcommand{\untol}{\ensuremath{\mathsf{UnToL}}}
\newcommand{\untod}{\ensuremath{\mathsf{UnToD}}}
\newcommand{\ltoun}{\ensuremath{\mathsf{LToUn}}}
\newcommand{\dtoun}{\ensuremath{\mathsf{DToUn}}}
\newcommand{\upath}[2]{
    \xymatrix @=.6pc {
        #1 \ar[d]\\
        #2
    }
}
\newcommand{\bcross}[3]{
    \xymatrix @=.9pc {
        & #1 \ar[d] & \\
        & \bullet \ar[dl] \ar[dr] & \\
        #2 & & #3
    }
}
\newcommand{\btcross}[4]{
    \xymatrix @=.9pc {
            & #1 \ar[dl] \ar[dr] & \\
            #2 & & #3 \ar[d] \\
            & & #4
        }
}
\newcommand{\bnode}[3]{
    \xymatrix @=.9pc {
            & #1 \ar[dl] \ar[dr] & \\
            #2 & & #3
        }
}
\newcommand{\motone}{\ensuremath{\mathsf{MoToNe}}}
\newcommand{\netomo}{\ensuremath{\mathsf{NeToMo}}}
\newcommand{\tableValuesAlt}{
    \begin{figure}[!th]
      \centering
      \begin{displaymath}
        \begin{array}{r|r}
\hline\hline
          \textbf{n} & \mathit{\mathbf{[x^n]\Loo}}\qquad\qquad\qquad \\\hline
          10 & 3550 \\
          20 & 253106837 \\
          30 & 27328990723991 \\
          40 & 3503758934959966001 \\
          50 & 493839291745701673090756 \\
          60 & 73920774614279746859303111580 \\
          70 & 11535317831253359292868402823579507 \\
          80 & 1855899670106913269845444317474927546423 \\
          90 & 305649725186484753579669948042728038245882292 \\
          100 & 51274965000307280025396615989999357497440689837989\\
          \hline\hline
          \textbf{n} & \mathit{\mathbf{\lfloor (1/\rLoo)^n C/n^{3/2}\rfloor} }\qquad\qquad\qquad\\\hline
          10 & 3767\\
          20 &  261489930\\
          30 &  27945182509468\\
          40 &  3563589864915927683\\
          50 &  500623883981281516056181\\
          60 & 74770204056757299054875868847\\
          70 &  11649230835743409545961872906078995\\
          80 & 1871967051054756263272240387385909197928\\
          90 &  308005368563187477433148735955649926279818246\\
          100 & 51631045600653143846184406311963448514677624135086
        \end{array}
      \end{displaymath}
      \caption{Approximation of $[x^n]\Loo$.}
      \label{fig:approxLoo}
    \end{figure}
}
\newcommand{\tableValuesHdNf}{
    \begin{figure}[!th]
      \centering
      \begin{displaymath}
        \begin{array}{r|r}
\hline\hline
          \textbf{n} & \mathit{\mathbf{[x^n]H}}\qquad\qquad\qquad \\\hline
          10 &  1902\\
          20 & 118768916 \\
          30 &  12338289374047\\
          40 &  1552505356757052270\\
          50 &  216408050593408223194666\\
          60 &  32156818736630052190010494575\\
          70 &  4992016749940033843389032870415375\\
          80 &  800041142163881275363093897487465240590\\
          90 &  131362728872240507612558556757894820073668254\\
          100 & 21984069003048322712483528437236630547685953755064\\
          \hline\hline
          \textbf{n} & \mathit{\mathbf{\lfloor (1/\rLoo)^n C_H/n^{3/2}\rfloor} }\qquad\qquad\qquad\\\hline
          10 & 1581 \\
          20 & 109732518 \\
          30 &  11727010776119 \\
          40 &  1495436887319673848 \\
          50 &  210083497584679365571791 \\
          60 & 31376820974748144171493861802 \\
          70 & 4888522574435898663355075650509052 \\
          80 & 785558576073780985739070920824898277393 \\
          90 & 129252413184969184232722751628403772087829182 \\
          100 & 21666626365243195881127917362969390314273901016408
        \end{array}
      \end{displaymath}
      \caption{Approximation of $[x^n]H$.}
      \label{fig:approxH}
    \end{figure}
}
\newcommand{\SummaryTable}{
  \begin{figure}[!htp]
    \centering
    \begin{tabular}{c@{\qquad\qquad}c@{\qquad}c@{\qquad\qquad\qquad\qquad}c}
      \textsf{nf}&\textsf{nhdnf}&& \textsf{terms with M}\\
       \textsf{sn}&&\textsf{hdnf} &$\mathsf{\overline{sn}}$\\[2pt]\hline
      \large{\textsf{0}} &\large{\textsf{0.295...}}& \large{\textsf{0.419...}} & \large{\textsf{1}}
    \end{tabular}

\medskip

 \textsf{nf} = normal forms \\

\textsf{nhdnf} = neutral head normal forms \qquad \textsf{hdnf} = head normal forms 

\textsf{terms with M} = terms containing subterm \textsf{M} 

\textsf{sn} = strongly normalizing terms \qquad $\mathsf{\overline{sn}}$ = non strongly
normalizing terms
   \caption{Summary of densities}
    \label{fig:summary}
  \end{figure}
}
\definecolor{darkbrown}{cmyk}{.3,.75,.75,.15}
\definecolor{vertfonce}{rgb}{0,.5,0}
\definecolor{vertfonce}{rgb}{0,.5,0}
\title[A natural counting of lambda terms]{A natural counting of lambda terms} 
\author[M. Bendkowski, K. Grygiel,
P. Lescanne, M. Zaionc]{Maciej Bendkowski$^\dagger$, Katarzyna Grygiel$^\dagger$,
  Pierre Lescanne$^{\dagger,\ddagger}$\\
  \and\\
  Marek Zaionc$^\dagger$
  \\\\
  $^\dagger$Jagiellonian University,\\
  Faculty of Mathematics and Computer Science,\\
  Theoretical Computer Science Department, \\
  ul. Prof. {\L}ojasiewicza 6, 30-348 Krak\'ow, Poland\\\\
  $^\ddagger$University of Lyon, \\
  \'Ecole normale sup\'erieure de Lyon, \\
  LIP (UMR 5668 CNRS ENS Lyon UCBL INRIA)\\
  46 all\'ee d'Italie, 69364 Lyon, France\\
  \email{grygiel@tcs.uj.edu.pl,pierre.lescanne@ens-lyon.fr}}
\thanks{The first author was supported by the National Science Center of Poland,
  grant number 2011/01/B/HS1/00944, when the author hold a post-doc position at the
  Jagiellonian University within the SET project co-financed by the European Union.}
\title[A natural counting of lambda terms]{A natural counting of lambda terms}
\author[K. Grygiel,P. Lescanne]{Katarzyna Grygiel$^\ddagger$\thanks{The author was supported by funding from the Jagiellonian University within the SET project. The project is co-financed by the European Union.},Pierre Lescanne$^{\ddagger,\star}$}
  \address{$^\ddagger$Jagiellonian University,\\
  Faculty of Mathematics and Computer Science,\\
  Theoretical Computer Science Department, \\
  ul. Prof. {\L}ojasiewicza 6, 30-348 Krak\'ow, Poland\\\\
  $^\star$University of Lyon, \\
  \'Ecole normale sup\'erieure de Lyon, \\
  LIP (UMR 5668 CNRS ENS Lyon UCBL INRIA)\\
  46 all\'ee d'Italie, 69364 Lyon, France\\
  \email{grygiel@tcs.uj.edu.pl,pierre.lescanne@ens-lyon.fr}
}
\begin{document}

\begin{abstract}
  We study the sequences of numbers corresponding to lambda terms of given sizes,
  where the size is this of lambda terms with de Bruijn indices in a very natural
  model where all the operators have size~$1$. For plain lambda terms, the sequence
  corresponds to two families of binary trees for which we exhibit bijections. We
  study also the distribution of normal forms, head normal forms and strongly
  normalizing terms.  In particular we show that strongly normalizing terms are of
  density $0$ among plain terms.

\newcommand{\ourKeyWords}{lambda calculus, combinatorics, functional programming,
test, random generator, bijection, binary tree, asymptotic}

\ifAMS
\medskip
\noindent \textbf{Keywords:}\ourKeyWords \else \keywords{\ourKeyWords}
\fi
\end{abstract}

\maketitle

\section{Introduction}

In this paper we consider a natural way of counting the size of $`l$-terms, namely
$`l$-terms presented by de Bruijn indices\footnote{Readers not familiar with de
  Bruijn indices are invited to read Appendix~\ref{sec:deBruijn}.} in which all the
operators are counted with size $1$. This means that abstractions,
applications, successors and zeros have all size~$1$.  Formally
\begin{eqnarray*}
  |`l M| &=& |M| +1\\
  |M_1\,M_2|&=& |M_1| + |M_2| +1\\
  |S n| &=& |n| +1\\
  |\zerodot| &=&1.
\end{eqnarray*}
For instance the term for \textsf{K} which is written traditionally $`l x . `l y. x$
in the lambda calculus is written $`l `l S \zerodot$ using de Bruijn indices and we
have:
\begin{displaymath}
  | `l `l S \zerodot| = 4.
\end{displaymath}
since there are two $`l$ abstractions, one successor $S$ and one $\zerodot$.  The
term for \textsf{S} (which should not be confused with the successor symbol) is
written $`l x.`ly.`l z. (x z) (y z)$ which is written $`l`l`l (((S S \zerodot)
\zerodot) ((S \zerodot) \zerodot))$ using de Bruijn indices and its size is:
\begin{displaymath}
  |`l`l`l (((S S \zerodot) \zerodot) ((S \zerodot) \zerodot))| = 13.
\end{displaymath}
since there are three $`l$ abstractions, three applications, three successors $S$'s,
and four $\zerodot$'s.  The term $`l x. x x$ which corresponds to the term
$`l(\zerodot\, \zerodot)$ has size $4$ and the term $(`l x. x x) (`l x. x x)$ which
corresponds to the term $`w$ is written $(`l (\zerodot\, \zerodot))\,`l (\zerodot\,
\zerodot)$ and has size~$9$.  The term $`l f . (`l x . f (x x)) (`l x . f (x x))$
which corresponds to the fixpoint~$\mathsf{Y}$ is written %
$`l ((`l ((S \zerodot) \, (\zerodot\, \zerodot)))~`l ((S \zerodot) \, (\zerodot\,
\zerodot)))$ and has size $16$.

\section{Lambda terms}
\label{sec:plain_terms}

\subsection{Counting plain terms with a natural size: $\Loo$}
\label{sec:loo}

Since the terms are either applications, abstractions or de Bruijn indices, the set
$\calLoo$ of lambda terms is solution of the combinatorial equation:
\begin{displaymath}
  \calLoo  = \calLoo\, \calLoo \ `(+)\ `l \calLoo \ `(+)\ \D
\end{displaymath}
where $\D$ is the set of de Bruijn indices which is solution of
\begin{displaymath}
  \D = S \D `(+) \zerodot
\end{displaymath}
Let us call $\Loo$ the generating function for counting the numbers of the plain
terms. It is solution of the functional equation:
\begin{eqnarray*}
  \Loo &=& z \Loo^2 + z \Loo + \frac{z}{1-z},
\end{eqnarray*}
which yields the equation:
\begin{eqnarray}\label{eq:Loo}
  z \Loo^2 - (1-z) \Loo + \frac{z}{1-z} = 0 
\end{eqnarray}
which has discriminant
\begin{eqnarray*}
  \DLoo &=& (1 - z)^2 - 4\frac{z^2}{1-z}\quad = \quad \frac{(1 - z)^3 - 4 z^2}{1-z}\\
  &=& \frac{1 - 3z - z^2 - z^3}{1-z}
\end{eqnarray*}
This gives the solution
\begin{eqnarray*}
  \Loo &=&  \frac{(1-z) - \sqrt{\DLoo}}{2z}\\
  &=& \frac{(1-z)^{3/2} - \sqrt{1-3z-z^2-z^3}}{2z\sqrt{1-z}}
\end{eqnarray*}
which has $`r_{\Loo} \doteq 0.29559774252208393$ as pole closest to $0$. 
% Since
% $1/`r_{\Loo} \doteq 3.382975767906247$, the number of $`l$-terms of size $n$ grows
% like $3.3829...^n$. See Theorem~\ref{theorem:approx} for a better approximation.
The $18$ first values of $[z^n]\Loo$ are:

\medskip

\begin{small}
  0,~~1,~~2,~~4,~~9,~~22,~~57,~~154,~~429,~~1223,~~3550,~~10455,~~31160,~~93802,~~
  284789,~~871008,~~2681019%,~~8298933,~~25817396,~~80674902,~~253106837,~~

  % 796968056,~~2517706037,~~7977573203,~~25347126630,~~80738862085,~~

  % 257778971504,~~824798533933,~~2644335308022,~~8493626448824
\end{small}
\medskip

This sequence is \textbf{A105633} in the \emph{Online Encyclopedia of Integer
  Sequences}.

\begin{theo}\label{theorem:approx} Assume   $C \doteq 0.60676 ... \textrm{~and~} \rLoo\doteq
  0.29559...$ \\
\qquad that is $1 /`r_{\Loo} \doteq 3.38297...$
  \begin{displaymath}
   [z^n]\Loo \sim \left(\frac{1}{\rLoo}\right)^n \frac{C}{n^{\frac{3}{2}}}
  \end{displaymath}
\end{theo}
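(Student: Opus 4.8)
The plan is to derive the estimate by singularity analysis in the sense of Flajolet and Odlyzko, applied directly to the closed form
\[
  \Loo = \frac{(1-z)^{3/2} - \sqrt{1-3z-z^2-z^3}}{2z\sqrt{1-z}}
\]
obtained above. First I would locate the dominant singularity. Because the coefficients $[z^n]\Loo$ are nonnegative, Pringsheim's theorem ensures that a singularity of smallest modulus sits on the positive real axis. The only candidate singularities are $z=1$ (where $1-z$ vanishes) and the roots of $P(z) := 1-3z-z^2-z^3$ (the apparent pole at $z=0$ is removable, as witnessed by the power series). The polynomial $P$ has the real root $\rLoo \doteq 0.29559\ldots$; I would check that it is the root of $P$ of smallest modulus and that $\rLoo < 1$, so that $\rLoo$ is the unique dominant singularity. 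A convenient check is that the product of the three roots of $P$ equals $1$, forcing the two non-real roots to have modulus $>1$.

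Next I would compute the singular expansion of $\Loo$ at $z = \rLoo$. Writing $P(z) = (1 - z/\rLoo)\,Q(z)$ with $Q$ analytic and $Q(\rLoo) = -\rLoo\,P'(\rLoo) \neq 0$, we get
\[
  \sqrt{P(z)} = \sqrt{Q(\rLoo)}\;\sqrt{1 - z/\rLoo}\,\bigl(1 + O(1-z/\rLoo)\bigr).
\]
All the remaining factors, including $\dfrac{(1-z)^{3/2}}{2z\sqrt{1-z}} = \dfrac{1-z}{2z}$, are analytic and nonvanishing at $\rLoo$, so $\Loo$ has a square-root singularity
\[
  \Loo = A - B\,\sqrt{1 - z/\rLoo} + o\!\left(\sqrt{1 - z/\rLoo}\right),
  \qquad B = \frac{\sqrt{Q(\rLoo)}}{2\rLoo\sqrt{1-\rLoo}} > 0 .
\]

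Finally I would transfer this expansion to the coefficients. With the principal branch of the square roots, $\Loo$ is analytic in a $\Delta$-domain at $\rLoo$, since $z=1$ and the non-real roots of $P$ all lie strictly outside the circle $|z| = \rLoo$. The basic transfer lemma gives $[z^n]\sqrt{1 - z/\rLoo} \sim -\tfrac{1}{2\sqrt{\pi}}\,\rLoo^{-n}\,n^{-3/2}$, while the analytic part $A$ contributes an exponentially smaller term. Therefore
\[
  [z^n]\Loo \sim \frac{B}{2\sqrt{\pi}}\left(\frac{1}{\rLoo}\right)^{n}\frac{1}{n^{3/2}},
\]
which is the claimed asymptotics with $C = \dfrac{B}{2\sqrt{\pi}} = \dfrac{\sqrt{Q(\rLoo)}}{4\sqrt{\pi}\,\rLoo\sqrt{1-\rLoo}}$; a numerical evaluation gives $C \doteq 0.60676\ldots$. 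The step I expect to be most delicate is certifying the $\Delta$-analyticity rigorously: one must fix compatible branches of $\sqrt{1-z}$ and $\sqrt{P(z)}$ so that their combination is single-valued and analytic on a slit neighbourhood of $\rLoo$, and confirm that no other zero of $P$ (nor $z=1$) lies on the boundary circle $|z| = \rLoo$. Once dominance and uniqueness of $\rLoo$ are secured, the determination of $B$ and hence $C$ is a routine computation.
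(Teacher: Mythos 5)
Your proposal is correct and follows essentially the same route as the paper: both isolate the square-root singularity of $\Loo$ at the smallest root $\rLoo$ of $1-3z-z^2-z^3$, factor out $(1-z/\rLoo)$, and apply the Flajolet--Sedgewick transfer theorem, arriving at the same constant $C=\frac{1}{4\sqrt{\pi}\,\rLoo}\sqrt{\rLoo(3\rLoo^2+2\rLoo+3)/(1-\rLoo)}\doteq 0.60676$. The only difference is that you spell out the justification that $\rLoo$ is the unique dominant singularity (Pringsheim plus the product-of-roots argument) and the $\Delta$-analyticity, points the paper leaves implicit.
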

\begin{proof}
  The proof mimics this of Theorem~1 in \cite{DBLP:journals/corr/GrygielL14}.  Let us
  write $\Loo$ as
  \begin{eqnarray*}
    \Loo &=& \frac{(1-z)-\sqrt{\frac{1-3z-z^2-z^3}{1-z}}}{2z}\\
    &=& \frac{(1-z)-\sqrt{\rLoo(1-\frac{z}{\rLoo})\frac{Q(z)}{1-z}}}{2z}
  \end{eqnarray*}
  where
  \begin{eqnarray*}
    R(z) &=& z^3+z^2+3z-1 \\
    Q(z) & = & \frac{R(z)}{\rLoo -z}
  \end{eqnarray*}
  Applying Theorem VI.1 of \cite{flajolet08:_analy_combin}, we get:
  \begin{displaymath}
    [z^n]\Loo \sim %
    \left(\frac{1}{\rLoo}\right)^n \cdot \frac{n^{-3/2}}{`G(-\frac{1}{2})}~\widetilde{C}
  \end{displaymath}
  with
  \begin{displaymath}
    \widetilde{C} = \frac{-\sqrt{\rLoo \frac{Q(\rLoo)}{1-\rLoo}}}{2\rLoo}
  \end{displaymath}
  Notice that $Q(\rLoo) = R'(\rLoo) = 3\rLoo^2 + 2\rLoo + 3$.  From this we get
  \begin{displaymath}
    C = \frac{\widetilde{C}}{`G(-\frac{1}{2})}\doteq 0.60676 ...
  \end{displaymath}
\end{proof}
Figure~\ref{fig:approxLoo} shows approximations of $[x^n]\Loo$.  
\tableValuesAlt

\subsection{An holonomic presentation}
\label{sec:holon}

Using the \textsf{Maple} package \textsf{gfun}~\cite{SalvyZimmermann1994} we were
able to build a holonomic equation of which $\Loo$ is the solution namely
\begin{displaymath}
  z^3 + z^2 - 2z + (z^3 + 3 z^2 -3z +1) \Loo + (z^5 + 2z^3 -4 z^2 + z) \Loo' = 0.
\end{displaymath}
From this equation it is possible to derive the following recursive and linear definition for
the coefficients:

\begin{center}
  \begin{math}\displaystyle
    L_{\infty,0} ~~=~~ 0\qquad L_{\infty,1} ~~=~~ 1\qquad L_{\infty,2} ~~=~~ 2\qquad
    L_{\infty,3} ~~=~~ 4
  \end{math}

\bigskip

  \begin{math}\displaystyle
    L_{\infty,n} ~~=~~ \frac{(4n-1)L_{\infty,n-1} - (2n-1)L_{\infty,n-2} -
      L_{\infty,n-3}- (n-4)L_{\infty,n-4}}{n+1}
  \end{math}
  \end{center}

\subsection{Counting terms  with at most $m$ indices: $L_m$}
\label{sec:lm}
The set $\calL_m$ of terms with free indices $0,.., m-1$ is described as
\begin{displaymath}
  \calL_m = \calL_m  \calL_m `(+) `l \calL_{m+1} \bigoplus_{i=0}^{m-1} S^i(\zerodot).
\end{displaymath}
The set $\calL_0$ is the set of closed lambda terms.  If we consider the $`l$-terms
with at most $m$ free indices, we get:
\begin{displaymath}
  L_ m =  z L_m^2 - z L_{m+1} + \frac{z(1-z^m)}{1-z}
\end{displaymath}
which yields:
\begin{eqnarray*}
  z L_m^2 - L_m + z\left(L_{m+1} + \frac{1-z^m}{1-z}\right) &=& 0.
\end{eqnarray*}
Let us state
\newcommand{\deltam}{1 - 4z^2\left(L_{m+1} + \frac{1-z^m}{1-z}\right)}
\begin{displaymath}
  \DLm = \deltam
\end{displaymath}
we have
\begin{displaymath}
  L_m = \frac{1-\sqrt{\DLm}}{2z} = \frac{1-\sqrt{\deltam}}{2z} .
\end{displaymath}
Notice that $L_{m}$ is defined using $L_{m+1}$.   If this definition is developed,
then $L_m$ is defined by an infinite sequence of nested radicals. 
The sequences $([z^n]L_m)_{n`:\nat}$ do not occur in the \emph{Online Encyclopedia of
  Integer Sequences}.

\subsection{Counting $`l$-terms with another notion of size}

Assume we take another notion of size in which $\zerodot$ has size $0$ and
applications have size $2$, whereas abstraction and succession keep their size $1$.  In other words:
  \begin{eqnarray*}
    |`l M| &=& |M| + 1\\
    |M_1\,M_2|&=& |M_1| + |M_2| + 2\\
    |S n| &=& |n| + 1\\
    |\zerodot| &=& 0.
  \end{eqnarray*}
The generating function\footnote{We write this function $A_1$ as a reference to the
  function $A(x,1)$ described in sequence \textbf{A105632} of the \emph{Online Encyclopedia of
  Integer Sequences}.} $A_1$ fulfills the identity:
\begin{displaymath}
z^2 A_1^2 - (1-z) A_1 + \frac{1}{1-z}.
\end{displaymath}
The reader may check that
\begin{displaymath}
  \Loo = z\,A_1 \quad \textrm{and} \quad   [z^n]A_1 = [z^{n+1}]\Loo.
\end{displaymath}
Hence both notions of size correspond to sequence \textbf{A105633}.  In
Appendix~\ref{sec:zeroforzerodot} we consider the case where all the operators
(application, abstraction and succession) have size $1$ and $\zerodot$ has size $0$.

\section{Typable Terms}
\label{sec:typTerms}

A difficult open problem is to count simply typable terms.   In this section, we give
empiric results we obtain by an implementation  on counting closed typed terms.

\begin{figure}[th!]
  \centering
  \begin{math}
    \begin{array}[t]{r|r|r}
      \textbf{size}& \textbf{typables} & \textbf{all}\\\hline
      \mathbf{0} & 0 & 0\\
      \mathbf{1} & 0 & 0\\
      \mathbf{2} & 1 & 1\\
      \mathbf{3} & 1 & 1\\
      \mathbf{4} & 2 & 3\\
      \mathbf{5} & 5 & 6\\
      \mathbf{6} & 13 & 17\\
      \mathbf{7} & 27 & 41\\
      \mathbf{8} & 74 & 116\\
      \mathbf{9} & 198 & 313\\
      \mathbf{10} & 508 & 895\\
      \mathbf{11} & 1371 & 2550\\
      \mathbf{12} & 3809 & 7450\\
      \mathbf{13} & 10477 & 21881\\
      \mathbf{14} & 29116 & 65168\\
      \mathbf{15} & 82419 & 195370\\
      \mathbf{16} & 233748 & 591007\\
      \mathbf{17} & 666201 & 1798718\\
      \mathbf{18} & 1914668 & 5510023 \\
      \mathbf{19} & 5528622 & 16966529 \\
      \mathbf{20} & 16019330 & 52506837 \\
      \mathbf{21} & 46642245 & 163200904 \\
      \mathbf{22} & 136326126 & 509323732 \\
      \mathbf{23} & 399652720 & 1595311747 \\
      \mathbf{24} & 1175422931 & 5013746254 
    \end{array}
  \end{math}
  \caption{Numbers of typable closed terms vs numbers of closed terms}
  \label{fig:typabl}
\end{figure}

\section{ $E$-free black-white binary trees}

A black-white binary tree is a binary tree with colored nodes using two colors,
\emph{black} $\bullet$ and \emph{white} $\circ$.  \emph{The root of a black-white
  binary tree is $`(!)$, by convention.}  A $E$-free black-white binary tree is a
black-white binary tree in which edges from a set $E$ are forbidden. For instance if
the set of forbidden edges is $E_1 = \{\lf{`(!)}{`(?)}, \ri{`(!)}{`(?)},
\ri{`(!)}{`(!)}, \ri{`(?)}{`(?)}\}$, this means that only edges in $A_1 =
\{\lf{`(?)}{`(!)}, \lf{`(!)}{`(!)}, \lf{`(?)}{`(?)}, \ri{`(?)}{`(!)}\}$ are allowed.
The $E_1$-free black-white binary trees of size $3$ and~$4$ are as many as lambda
terms of size $3$ and $4$.  They are listed in Fig.~\ref{fig:bij3} and
Fig.~\ref{fig:bij4} second column.  For $E_1 = \{\lf{`(!)}{`(?)}, \ri{`(!)}{`(?)}, \ri{`(!)}{`(!)},
\ri{`(?)}{`(?)}\}$, like for $E_2 = \{\lf{`(?)}{`(!)}, \lf{`(!)}{`(!)},
\lf{`(?)}{`(?)}, \ri{`(?)}{`(!)}\}$, which is obtained by left-right symmetry, the
$E$-free black-white binary trees are counted by
\textsf{A105633}~\cite{DBLP:journals/dm/GuLM08}.  In what follows we will consider
$E_1$ and we will rather speak in terms of an allowed set of pattern namely $A_1$.
For simplicity, we will call in this paper \emph{black-white trees}, the binary black-white
trees with allowed pattern set $A_1$.

\begin{figure}[!tp]
  \centering
  \begin{math}
    \begin{array}{c|c|c|c}
      `l-\textrm{\bf terms}&\textrm{\bf black-white trees} &  \textrm{\bf zigzag free
        trees} & \textrm{\bf neutral hnf}
      \\\hline
      S^2\zerodot& \xymatrix@=.2pc{&&{\bullet}\ar@{-}[dl]\\
        &{\bullet}\ar@{-}[dl]\\{\bullet}}  &
      \xymatrix@=.2pc{{`*}\ar@{-}[dr]\\&{`*}\ar@{-}[dr]\\&&{`*}} %
      & S^3\zerodot \\ \hline 
      `l S\zerodot& \xymatrix@=.2pc{&&{\bullet}\ar@{-}[dl]\\
        &{\bullet}\ar@{-}[dl]\\{`(?)}}    &   %
      \xymatrix@=.2pc{{`*}\ar@{-}[dr]\\&{`*}\ar@{-}[dl]\\{`*}} %
      & \zerodot\,(S\zerodot)\\ \hline 
      `l `l \zerodot& \xymatrix@=.2pc{&&{\bullet}\ar@{-}[dl]\\
        &{`(?)}\ar@{-}[dl]\\{`(?)}} & %
      \xymatrix@=.2pc{&&{`*}\ar@{-}[dl]\\ &{`*}\ar@{-}[dl]\\{`*}}%
      & \zerodot\,(`l\zerodot) \\ \hline
      \zerodot\, \zerodot & \xymatrix@=.2pc{&{\bullet}\ar@{-}[dl]\\
        {`(?)}\ar@{-}[dr]\\&{`(!)}}   & %
      \xymatrix@=.2pc{&{`*}\ar@{-}[dl]\ar@{-}[dr]\\{`*}&&{`*}} %
      & (S \zerodot)\, \zerodot \\ \hline
    \end{array}
  \end{math}
  \caption{Bijection between $`l$-terms, $E_1$-free black-white binary trees,
    zigzag-free trees of size $3$ ($L_3=4$) and neutral head normal forms
    (Section~\ref{sec:nf}) of size $4$ ($K_4=4$).}
  \label{fig:bij3}
\end{figure}

\begin{figure}[!tp]
  \centering
  \begin{math}
    \begin{array}{c|c|c|c}
      `l-\textrm{\bf terms}&\textrm{\bf black-white trees} &  \textrm{\bf zigzag free trees} & \textrm{\bf neutral hnf}
      \\\hline
      S^3\zerodot & \xymatrix@=.2pc{&&& {\bullet}\ar@{-}[dl]\\ && {\bullet}\ar@{-}[dl] \\ &
        {\bullet}\ar@{-}[dl] \\ {\bullet}} & %
      \xymatrix@=.2pc{{`*}\ar@{-}[dr]\\ & {`*}\ar@{-}[dr] \\&& {`*}\ar@{-}[dr] \\
        &&&{`*}} %
      & S^4\zerodot\\\hline
      `lS^2\zerodot & \xymatrix@=.2pc{&&& {\bullet}\ar@{-}[dl]\\ && {\bullet}\ar@{-}[dl] \\ &
        {\bullet}\ar@{-}[dl] \\ {\circ}}    &
      \xymatrix@=.2pc{ {`*}\ar@{-}[dr]\\ &{`*}\ar@{-}[dr]\\&&`*\ar@{-}[dl]\\&`*}%
      & \zerodot\,(S^2\zerodot)\\\hline
      `l`l S\zerodot & \xymatrix@=.2pc{&&& {\bullet}\ar@{-}[dl]\\ && {\bullet}\ar@{-}[dl] \\ &
        {\circ}\ar@{-}[dl] \\ {\circ}}  & %
      \xymatrix@=.2pc{&{`*}\ar@{-}[dr]\\  &&{`*}\ar@{-}[dl] \\& {`*}\ar@{-}[dl] \\
        {`*}} %
      &  \zerodot\,(`lS\zerodot)\\\hline 
      `l`l`l \zerodot & \xymatrix@=.2pc{&&& {\bullet}\ar@{-}[dl]\\ && {\circ}\ar@{-}[dl] \\ &
        {\circ}\ar@{-}[dl] \\ {\circ}} & %
      \xymatrix@=.2pc{&&& {`*}\ar@{-}[dl]\\ && {`*}\ar@{-}[dl] \\ &
        {`*}\ar@{-}[dl] \\ {`*}}  %
      & \zerodot\,(`l`l\zerodot)\\\hline
      \zerodot\, (S \zerodot)& \xymatrix@=.2pc{&& {\bullet}\ar@{-}[dl]\\ & {\bullet}\ar@{-}[dl] \\ {\circ}\ar@{-}[dr] \\ &{\bullet}} &
      \xymatrix@=.2pc{&{`*}\ar@{-}[dr]\\
        &&{`*}\ar@{-}[dl]\ar@{-}[dr]\\&{`*}&&{`*}} %
      & (S \zerodot)\, (S \zerodot) \\\hline%
      % \end{array}
      % \qquad
      % \begin{array}{c|c}
      %   `l-\textrm{\bf terms}&\textrm{\bf black-white binary trees}
      %   \\\hline
      \zerodot\, (`l \zerodot) & \xymatrix@=.2pc{&& {\bullet}\ar@{-}[dl]\\ & {\circ}\ar@{-}[dl] \\ 
        {\circ}\ar@{-}[dr] \\ &{\bullet}}& %
      \xymatrix@=.2pc{&&{`*}\ar@{-}[dl]\\  &{`*}\ar@{-}[dr] \ar@{-}[dl] \\ {`*}&&{`*}} %
     & (S\zerodot)\, (`l \zerodot) \\\hline
      (`l\zerodot)\,\zerodot & \xymatrix@=.2pc{& {\bullet}\ar@{-}[dl]\\  {\circ}\ar@{-}[dr] \\ 
        & {\bullet}\ar@{-}[dl] \\ {\circ}} &
      \xymatrix@=.2pc{&{`*}\ar@{-}[dl]\ar@{-}[dr]\\  {`*}&&{`*}\ar@{-}[dl]\\ &{`*}}  % 
      & \zerodot\,\zerodot\,\zerodot\\\hline
      (S \zerodot)\,\zerodot & \xymatrix@=.2pc{& {\bullet}\ar@{-}[dl]\\  {\circ}\ar@{-}[dr] \\ 
        & {\bullet}\ar@{-}[dl] \\ {\bullet}}&%
      \xymatrix@=.2pc{&{`*}\ar@{-}[dl]\ar@{-}[dr]\\{`*}&&{`*}\ar@{-}[dr]\\ &&&{`*}} % <-
      &  (S^2 \zerodot)\,\zerodot\\\hline
      `l(\zerodot\,\zerodot) & \xymatrix@=.2pc{&& {\bullet}\ar@{-}[dl]\\ & {\circ}\ar@{-}[dl]\ar@{-}[dr] \\ 
        {\circ}&&  {\bullet}}&%
      \xymatrix@=.2pc{&&{`*}\ar@{-}[dl]\ar@{-}[dr]\\  &{`*}\ar@{-}[dl]&&{`*}\\ {`*}}  %
      &\zerodot\,(\zerodot\,\zerodot) \\\hline%
    \end{array}
  \end{math}
  
  \caption{Bijection between $`l$-terms, $E_1$-free black-white binary trees and
    zigzag free trees of size $4$ ($L_4=9$)  and neutral head normal forms (Section~\ref{sec:nf}) of size $5$ ($K_5=9$).}
  \label{fig:bij4}
\end{figure}

Before giving the bijection, let us give the trees corresponding to $\mathsf{K} =
`l`l S(\zerodot)$, to $\mathsf{S} = `l `l `l (SS\zerodot \, \zerodot)\,
(S\zerodot\,\zerodot)$, to $`w=(`l(\zerodot\,\zerodot))\,`l(\zerodot\,\zerodot)$, and
to $Y = `l(`l( S\,\zerodot\,(\zerodot\,\zerodot))\,`l(
S\,\zerodot\,(\zerodot\,\zerodot)))$:
\begin{displaymath}
  \begin{array}{l@{\qquad\qquad}l@{\qquad\qquad}l}
    \textrm{for~} \mathsf{K}  %
    \xymatrix@=.2pc{
      &&&`(!) \ar@{-}[dl]\\
      &&`(!) \ar@{-}[dl]\\
      &`(?) \ar@{-}[dl]\\
      `(?)
    }
    &
    \textrm{for~} \mathsf{S} 
    \xymatrix@=.2pc{
      &&&&& `(!) \ar@{-}[dl]\\
      &&&& `(?) \ar@{-}[dl]\ar@{-}[drr]\\
      &&& `(?)\ar@{-}[dl]\ar@{-}[dr] &&& `(!)\ar@{-}[dl]\\
      && `(?)\ar@{-}[dl]\ar@{-}[dl]&&`(!)\ar@{-}[dl]& `(!)\\
      &`(?)\ar@{-}[dl]&  & `(?)\ar@{-}[dr]& \\
      `(?) &&&& `(!) \ar@{-}[dl]\\
      &&& `(!) \ar@{-}[dl]\\
      && `(!)
    }
  \end{array}
\end{displaymath}
\begin{displaymath}
  \begin{array}{l@{\qquad\qquad}l@{\qquad\qquad}l}
    \textrm{for~} \mathsf{`w}  %
    \xymatrix@=.2pc{
      &&&&`(!) \ar@{-}[dl]\\
      &&&`(?) \ar@{-}[dl] \ar@{-}[dr]\\
      &&`(?) \ar@{-}[dl]&&`(!)\\
      &`(?) \ar@{-}[dr]\\
      &&`(!) \ar@{-}[dl]\\
      & `(?) \ar@{-}[dl] \ar@{-}[dr]\\
      `(?)&&`(!)
    }
    &
    \textrm{for~} \mathsf{Y}  %
    \xymatrix@=.2pc{
      &&&&&&&`(!)\ar@{-}[dl]\\
      &&&&&&`(?)\ar@{-}[dl]\ar@{-}[dr]\\
      &&&&&`(?)\ar@{-}[dl]\ar@{-}[dr]&&`(!)&&\\
      &&&&{`(?)}\ar@{-}[dl] && `(!)\ar@{-}[dl]\\
      &&&`(?) \ar@{-}[dl]\ar@{-}[dr]&& `(!)\\
      &&`(?)&&`(!)\ar@{-}[dl]\\
      &&&`(?)\ar@{-}[dl]\ar@{-}[dr]\\
      &&`(?)\ar@{-}[dl]\ar@{-}[dr]&&`(!)&&\\
      &{`(?)} && `(!)\ar@{-}[dl]\\
      && `(!)
    }
  \end{array}
\end{displaymath}

\subsection{Recursive description}
Assume $\Box$ is the empty tree which is usually not represented in drawing. 
The $E_1$-free black-white binary trees are described by the following combinatorial equation:
\begin{eqnarray*}
  \B\W_{`(!)} &=& \raisebox{.2cm}{\xymatrix@=.2pc{&{`(!)}\ar@{-}[dl]\\ \B\W_{`(!)}}} \ `(+)\
  \raisebox{.2cm}{\xymatrix@=.2pc{&{`(!)}\ar@{-}[dl]\\ \B\W_{`(?)}}} \\
  \B\W_{`(?)} &=& \Box \ `(+)\  \raisebox{.2cm}{\xymatrix@=.2pc{&{`(?)}\ar@{-}[dl]\\ \B\W_{`(?)}}} \ `(+)\ \raisebox{.2cm}{\xymatrix@=.2pc{&{`(?)}\ar@{-}[dl]\ar@{-}[dr]\\ \B\W_{`(?)}&&\B\W_{`(!)}}}
\end{eqnarray*}
which yields the following functional equations:
\begin{eqnarray*}
  BW_{`(!)} &=& z BW_{`(!)} + z BW_{`(?)}\\
  BW_{`(?)}&=& 1 + z BW_{`(?)} + z BW_{`(?)}\,BW_{`(!)}
\end{eqnarray*}
hence
\begin{eqnarray*}
  BW_{`(?)} &=& \frac{1-z}{z}\, BW_{`(!)}
\end{eqnarray*}
and
\begin{displaymath}
  z(1-z) BW_{`(!)}^2 + (1-z)^2BW_{`(!)} + z = 0.
\end{displaymath}
which is the same equation up to a multiplication by $1-z$ as (\ref{eq:Loo}) namely
the equation defining $\Loo$

\subsection{The bijection}
Let us define the function $\ltobw$ from $`l$-terms to black-white   trees:
\begin{eqnarray*}
  \ltobw(\zerodot) &=& `(!)\\
  \ltobw(S(n)) &=& \lf{`(!)}{\ltobw(n)}\\
  \ltobw(`l M) &=& \lf{`(?)}{\ltobw(M)}\\
  \ltobw(M_1\,M_2) &=& \raisebox{.5cm}{${\tiny \xymatrix@=.2pc{&{\ltobw(M_2)}\ar@{-}[dl]\\{`(?)}\ar@{-}[dr]\\ &{\ltobw(M_1)}}}$}
\end{eqnarray*}
In other words a new node is added on the leftmost node of the tree.
 from black-white trees to $`l$-terms
Let us now define the function $\bwtol$
\begin{eqnarray*}
  \bwtol(`(!)) &=& \zerodot\\
  \bwtol\left(\lf{`(!)}{T}\right) &=& S(\bwtol(T))\\
  \bwtol\left(\lf{`(?)}{T}\right) &=& `l\, \bwtol(T)\\
  \bwtol\left(\raisebox{.5cm}{${\tiny \xymatrix@=.2pc{&{T_2}\ar@{-}[dl]\\{`(?)}\ar@{-}[dr]\\
          &{T_1}}}$}\right) &=& \bwtol(T_1) \, \bwtol(T_2)
\end{eqnarray*}
In other words, to decompose a binary tree which is not the node $`(!)$, we look for
the left most node.
\begin{itemize}
\item If the leftmost node is $`(!)$, then the $`l$-term is a de Bruijn index.
  Actually there are only $`(!)$'s (indeed $\lf{`(!)}{`(?)}$ is forbidden) and the
  tree is linear. If this linear tree has $n$ $`(!)$'s it represents
  $S^{n-1}(\zerodot)$.
\item If the leftmost node is $`(?)$ and has no child, then the $`l$-term is an
  abstraction of the bijection of the rest.
\item If the leftmost node is $`(?)$ and has a right child, then the $`l$-term
  is an application of the bijection  of the right
  subtree on
 the bijection of the above tree .
\end{itemize}

\begin{prop}
  $\ltobw \circ \bwtol = id_{`L}$ and $\bwtol \circ \ltobw = id_{\B\W_{`(!)}}$
\end{prop}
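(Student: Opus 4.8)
The plan is to establish the two stated identities as the two halves of a single claim: for every $`l$-term $M$ one has $\bwtol(\ltobw(M)) = M$, and for every tree $T \in \B\W_{`(!)}$ one has $\ltobw(\bwtol(T)) = T$. (With the diagrammatic reading of $\circ$ these are exactly $\ltobw \circ \bwtol = id_{`L}$ and $\bwtol \circ \ltobw = id_{\B\W_{`(!)}}$.) The guiding observation is that $\ltobw$ always grafts its new node at the \emph{leftmost} position of the tree built so far, so that the outermost constructor of $M$ is recorded precisely by the leftmost node of $\ltobw(M)$, while conversely $\bwtol$ is driven by a case analysis on that same leftmost node. Before the two inductions I would check well-definedness: an easy induction on $M$ shows that $\ltobw$ maps $`L$ into $\B\W_{`(!)}$, each clause respecting the allowed patterns $A_1$ (the node produced by $S$ is a $`(!)$ carrying only a left descendant, the node produced by an application is a $`(?)$ carrying $\ltobw(M_1) \in \B\W_{`(!)}$ on its right, and so on); dually, the leftmost-node trichotomy used by $\bwtol$, namely that the node is $`(!)$, or a right-childless $`(?)$, or a $`(?)$ with a right child, is exhaustive and mutually exclusive on allowed trees, and each recursive call lands on a strictly smaller element of $\B\W_{`(!)}$, so $\bwtol$ is total.

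I would prove $\bwtol(\ltobw(M)) = M$ by structural induction on $M$. The base case $M = \zerodot$ is immediate, since $\ltobw(\zerodot) = `(!)$ and $\bwtol(`(!)) = \zerodot$. For $M = S(n)$ we have $\ltobw(S(n)) = \lf{`(!)}{\ltobw(n)}$, whose leftmost node is the freshly added $`(!)$ and whose remainder is the nonempty tree $\ltobw(n)$; hence the successor clause of $\bwtol$ fires and returns $S(\bwtol(\ltobw(n)))$, which equals $S(n)$ by the induction hypothesis. The abstraction case $M = `l M'$ is identical with $`(?)$ in place of $`(!)$, the leftmost $`(?)$ having no right child so that the abstraction clause, not the application clause, applies. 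For $M = M_1\,M_2$ the map $\ltobw$ attaches a new $`(?)$ at the leftmost node of $\ltobw(M_2)$ and hangs $\ltobw(M_1)$ on its right; the leftmost node of the result is therefore a $`(?)$ bearing a right subtree, so $\bwtol$ invokes its application clause and returns $\bwtol(\ltobw(M_1))\,\bwtol(\ltobw(M_2)) = M_1\,M_2$ by the induction hypothesis applied to the two strictly smaller subterms.

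The second identity $\ltobw(\bwtol(T)) = T$ I would prove by induction on the number of nodes of $T \in \B\W_{`(!)}$. If $T$ is the single node $`(!)$, then $\bwtol(T) = \zerodot$ and $\ltobw(\zerodot) = `(!) = T$. Otherwise I split on the leftmost node of $T$, which by well-definedness falls into exactly one of the three remaining clauses of $\bwtol$, and in each case the two maps undo one another: writing $T = \lf{`(!)}{T'}$ gives $\ltobw(\bwtol(T)) = \ltobw(S(\bwtol(T'))) = \lf{`(!)}{\ltobw(\bwtol(T'))} = \lf{`(!)}{T'} = T$ by the induction hypothesis on the smaller tree $T'$; the childless-$`(?)$ case is analogous; and when the leftmost node is a $`(?)$ with right subtree $T_1$ and remainder $T_2$, both of which are strictly smaller allowed trees, I get $\ltobw(\bwtol(T)) = \ltobw(\bwtol(T_1)\,\bwtol(T_2))$, which by the definition of $\ltobw$ on an application grafts a $`(?)$ bearing $\ltobw(\bwtol(T_1)) = T_1$ at the leftmost node of $\ltobw(\bwtol(T_2)) = T_2$, reconstructing $T$ exactly.

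The step I expect to be delicate is the application clause together with the surrounding bookkeeping about the leftmost node. Everything hinges on the fact that grafting a node at the leftmost position and decomposing at the leftmost node are mutually inverse operations, and that the remainder tree $T_2$ obtained by deleting the leftmost $`(?)$ and its right subtree is again a genuine element of $\B\W_{`(!)}$ strictly smaller than $T$. Verifying this cleanly requires the pattern constraints $A_1$, in particular that a $`(!)$ never bears a right child and that the right child of a $`(?)$ is always $`(!)$-rooted, so that the leftmost-node trichotomy really is well-defined and the recursions in both directions are structurally decreasing. Once that is nailed down, all four cases of each induction reduce to a direct rewrite using one clause of each function and the induction hypothesis.
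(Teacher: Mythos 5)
Your proof is correct. The paper actually states this proposition without any proof at all --- the only justification it offers is the informal remark that $\ltobw$ grafts a new node at the leftmost position and the itemized description of how $\bwtol$ decomposes at the leftmost node --- and your double structural induction, together with the well-definedness check that the pattern constraints $A_1$ make the leftmost-node trichotomy exhaustive and the recursions decreasing, is precisely the formalization of that intended argument.
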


\subsection{The bijection in Haskell}
\label{subsec:bwToL-bijection}
In this section we describe Haskell programs for the bijections. First we define
black-white  trees.  We consider three kinds of trees: leafs (of arity zero and
size zero) corresponding to $\Box$ and not represented in drawing.

%%%%%%%%%%%%%%%%%%%%%%%%%%%%%%%%%%%%%%%
%% M. B: Left for future reference ! %%
%%%%%%%%%%%%%%%%%%%%%%%%%%%%%%%%%%%%%%%
% \begin{minted}{haskell}
%   data LTerm = Zero | S LTerm | DBAbs LTerm | DBApp LTerm LTerm

%   data BTree = Leaf | Black BTree BTree | White BTree BTree

% -- insert a tree t at the leftmost  node
% insertLeftmost :: BwTree -> BwTree -> BwTree
% insertLeftmost t BwLeaf = t
% insertLeftmost t (Black t1 t2) = Black (insertLeftmost t t1) t2
% insertLeftmost t (White t1 t2) = White (insertLeftmost t t1) t2

% -- bijection from lambda terms to E-free black-white binary trees
% lToBw :: LTerm -> BwTree
% lToBw Zero = Black BwLeaf BwLeaf
% lToBw (S n) = Black (lToBw n) BwLeaf
% lToBw (DBAbs t) = insertLeftmost (White BwLeaf BwLeaf)  (lToBw t)
% lToBw (DBApp t1 t2) = 
%     insertLeftmost (White BwLeaf (lToBw t1)) (lToBw t2)

% -- bijection from E-free 2-binary trees to lambda terms, 
% -- True means "black", False means "white"
% btToL :: BwTree -> LTerm
% btToL (Black BwLeaf BwLeaf) = Zero
% btToL t  = let (b,t2,t1) = removeLeftmost t 
%            in if b then let n = btToL t2
%                         in (S n)
%                    else case t1 of
%                          BwLeaf -> DBAbs (btToL t2)
%                          Black _ _ -> DBApp (btToL t1) (btToL t2)

% -- Take a bw-tree and returns 1. a boolean (black or white?) 
% -- 2. the pruned tree 3. the tree pending on the leftmost
% removeLeftmost :: BwTree -> (Bool,BwTree,BwTree)
% removeLeftmost (Black BwLeaf BwLeaf) = (True, BwLeaf, BwLeaf)
% removeLeftmost (White BwLeaf t) = (False, BwLeaf, t)
% removeLeftmost (Black t1 BwLeaf) = 
%     let (b,t',t3) = removeLeftmost t1 in (b,Black t' BwLeaf,t3)
% removeLeftmost (White t1 t2) = 
%      let (b,t',t3) = removeLeftmost t1 in (b,White t' t2,t3)

% \end{minted}

% M. B: Need to figure out how to position
% a listing over one page long in a nice and
% compact way.
\inputminted{haskell}{\codeDir/BwToLTopDown.hs}

In order to translate $`l$-terms to corresponding black-white trees we carry out
a rather unusual induction, where after the recursive step we attach a new
subtree to the leftmost node in one of the previously obtained black-white
trees. Similarly, in the inverse translation from black-white trees to
$`l$-terms, we have to cut out the leftmost node of the current
black-white tree and pattern match against the result. This unusual recursion
is a result of our natural top-down representation of black-white trees, where
children are drawn below their parents. Note that if we change this convention
so that children are drawn on the right to their parents, the previously
leftmost node becomes the root of the black-white tree. The data
type for black-white trees does not change, but instead of top-down
trees, we are working with left-right ones. Such a representation
simplifies the overall implementation as the algorithm is no longer required to
look for the leftmost node.

% \subsection{Dyck paths avoiding $uudu$}

% (after~\cite{DBLP:journals/dm/SapounakisTT06}). The set $\hcalD$ of Dyck paths
% avoiding $uudu$ is described as follows:
% \begin{eqnarray*}
%   \hcalD &=& \varepsilon \,`(+)\, u\widetilde{\mathcal{D}}d\hcalD\\
%   \widetilde{\mathcal{D}} &=&\hcalD \,`(-)\, ud\hcalD \,`(+)\, ud
% \end{eqnarray*}
% which gives (counting semi length)
% \begin{eqnarray*}
%   \hD &=& 1 + z(\hD -z\hD +z)\hD\\
%   % &=&  1 + z\hD^2 -  z^2\hD^2 + z^2\hD.
% \end{eqnarray*}
% which gives
% \begin{displaymath}
%   z(1-z)\hD^2 - (1-z^2)\hD + 1 =0.
% \end{displaymath}
% We get
% \begin{eqnarray*}
%   `D_{\hD} &=& (1-z)(1-3z-z^2-z^3)\\
%   &=& (1-z)^2\DLoo.
% \end{eqnarray*}
% hence
% \begin{eqnarray*}
%   \hD &=& \frac{(1+z)(1-z) - (1-z) \sqrt{\DLoo}}{2z(1-z)}\\
%   % &=& \frac{(1+z) - \sqrt{\DLoo}}{2z}\\
%   % &=& 1+\Loo.
% \end{eqnarray*}
% Sapounakis et al. ~\cite{DBLP:journals/dm/SapounakisTT06} give the formula:
% \begin{displaymath}
%   [z^{n+1}]\hD = \sum_{k=0}^{\lfloor (n-1)/2\rfloor} \frac{(-1)^k}{n-k}\binom{n-k}{k}\binom{2n-3k}{n-2k-1}
% \end{displaymath}
% Dyck paths avoiding $uudu$ are in bijection with binary trees having no subtree
% \begin{displaymath}
%   \tiny \xymatrix @=.05pc{&{`(!)}\ar@{-}[dl]\ar@{.}[dr]\\ {`(!)}\ar@{-}[dr]&&\\&{`(!)}\ar@{.}[dl]\ar@{.}[dr] \\&&&&}
% \end{displaymath}

\section{Binary trees without zigzags}
\label{sec:zigzag}

\subsection{Non empty zigzag free binary trees}
Consider $\B\Z_1$ the set of binary trees with no zigzag i.e., with no subtree like
\begin{displaymath}
  \tiny \xymatrix @=.05pc{&{`*}\ar@{-}[dl]\ar@{-}[dr]\\
    `*\ar@{-}[dr]&&\\%
    &{`*}\ar@{-}[dl]\ar@{-}[dr] \\ &&&&}
\end{displaymath}
$\B\Z_1$ is described by the combinatorial equations:
\begin{eqnarray*}
  \B\Z_1 &=& \raisebox{.4cm}{\xymatrix
    @=.05pc{{`*}\ar@{-}[dr]\\&\B\Z_1}} \ `(+)\  \B\Z_2\\
  \B\Z_2 &=& {`*} \ \ `(+)\ \  \raisebox{.4cm}{\xymatrix
    @=.05pc{&{`*}\ar@{-}[dl]\\ \B\Z_2}} \ `(+)\  \raisebox{.4cm}{\xymatrix
    @=.05pc{&{`*}\ar@{-}[dl]\ar@{-}[dr]\\\B\Z_2&&\B\Z_1}}  
\end{eqnarray*}

Like $\Loo$ and $BW_{`(!)}$, $BZ_1$ is solution of the functional equation:
\begin{displaymath}
  z(1-z) BZ_1^2 + (1-z)^2 BZ_1 + z = 0.
\end{displaymath}

\subsection{A formula}

Sapounakis et al.~\cite{DBLP:journals/dm/SapounakisTT06} consider a similar sequence
defined in term of avoiding Dyck paths and give the formula:
\begin{displaymath}
  [z^{n}]BZ_1 = [z^{n}]\Loo  = \sum_{k=0}^{(n-1) \div 2} \frac{(-1)^k}{n-k}\binom{n-k}{k}\binom{2n-3k}{n-2k-1}
\end{displaymath}

\section{The bijections between black white trees and zigzag free trees}
\label{sec:bwbz}

\subsection{From black white trees to zigzag free trees}
Let us call \textsf{BwToBz} the bijection from black white trees to zigzag free trees.
Notice that the fourth equation removes a $`(!)$ and the last equation adds a $`*$,
keeping a balance between $`(!)$ nodes and $`*$ nodes on the leftmost branch.

\begin{eqnarray*}
  \bwtobz(\Box) &=& \Box\\
  \bwtobz(`(!)) &=& `*\\
  \bwtobz\left(\raisebox{.4cm}{\xymatrix
      @=.05pc{&{`(!)}\ar@{-}[dl]\\ t}}\right)  &=& %
  \raisebox{.4cm}{\xymatrix
    @=.05pc{{`*}\ar@{-}[dr]\\ &{\scriptsize \bwtobz}(t)}}\qquad \text{when~} t = \raisebox{.4cm}{\xymatrix
    @=.05pc{&{`(!)}\ar@{-}[dl]\\ u}}\\
  \bwtobz\left(\raisebox{.4cm}{\xymatrix
      @=.05pc{&{`(!)}\ar@{-}[dl]\\ t}}\right)  &=& %
  \bwtobz(t)\qquad \text{when~} t = \raisebox{.4cm}{\xymatrix
    @=.05pc{&{`(?)}\ar@{-}[dl]\\ u}}\\
  \bwtobz\left(\raisebox{.4cm}{\xymatrix
      @=.05pc{&{`(?)}\ar@{-}[dl]\ar@{-}[dr]\\ t&&t'}}\right)  &=& %
  \raisebox{.4cm}{\xymatrix@=.05pc{&{`*}\ar@{-}[dl]\ar@{-}[dr]\\ {\scriptsize \bwtobz}(t)&&{\scriptsize \bwtobz}(t')}}%
  \qquad \text{when~} t = \raisebox{.4cm}{\xymatrix@=.05pc{&{`(?)}\ar@{-}[dl]\ar@{-}[dr]\\
      u_1&& u_2}}\\
  \bwtobz\left(\raisebox{.4cm}{\xymatrix
      @=.05pc{{`(?)}\ar@{-}[dr]\\ &t}}\right)  &=& %
  \raisebox{.4cm}{\xymatrix@=.05pc{&{`*}\ar@{-}[dl]\ar@{-}[dr]\\ `*&&{\scriptsize \bwtobz}(t)}}
\end{eqnarray*}

\subsection{From zigzag free trees to black white trees}

We use two functions $\bztobw_{`(!)}$ and $\bztobw_{`(?)}$. Notice also that on
the leftmost branch a $`(!)$ is added and a~$`*$ is removed;
\begin{eqnarray*}
  \bztobw_{`(!)} (\Box)&=& \Box\\
  \bztobw_{`(!)} (`*) &=& `(!)\\
  \bztobw_{`(!)} \left(\raisebox{.4cm}{\xymatrix@=.05pc{{`*}\ar@{-}[dr]\\&t}}\right)%
  &=& %
  \raisebox{.4cm}{\xymatrix@=.05pc{&{`(!)}\ar@{-}[dl]\\\bztobw_{`(!)}(t)}} \qquad \text{when~} %
  t = \raisebox{.4cm}{\xymatrix@=.05pc{&{`*}\ar@{-}[dl]\ar@{-}[dr]\\
      u_1&& u_2}}\\
  \bztobw_{`(!)} \left(\raisebox{.4cm}{\xymatrix@=.05pc{&{`*}\ar@{-}[dl]\ar@{-}[dr]\\t&&t'}}\right)%
  &=& %
  \raisebox{.4cm}{\xymatrix@=.05pc{&&{`(!)}\ar@{-}[dl]\\ &`(?)\ar@{-}[dl]\ar@{-}[dr]\\
      \bztobw_{`(?)}(t)&&\bztobw_{`(!)}(t')}} \qquad \text{when~} %
  t = \raisebox{.4cm}{\xymatrix@=.05pc{&{`*}\ar@{-}[dl]\ar@{-}[dr]\\
      u_1&& u_2}}\\\\
  \bztobw_{`(?)} (`*) &=& \Box\\
  \bztobw_{`(?)}\left(\raisebox{.4cm}{\xymatrix@=.05pc{&{`*}\ar@{-}[dl]\ar@{-}[dr]\\t&&t'}}\right)%
  &=& %
  \raisebox{.4cm}{\xymatrix@=.05pc{&{`(?)}\ar@{-}[dl]\ar@{-}[dr]\\ \bztobw_{`(?)}(t)&&\bztobw_{`(!)}(t')}}%
  \qquad \text{when~} t = \raisebox{.4cm}{\xymatrix@=.05pc{&{`*}\ar@{-}[dl]\ar@{-}[dr]\\
      u_1&& u_2}}\\\end{eqnarray*}

\begin{prop}
  $\bztobw_{`(!)} \circ \bwtobz = id_{\B\W_{`(!)}}$ and $\bwtobz \circ \bztobw_{`(!)} = id_{\B\Z}$.
\end{prop}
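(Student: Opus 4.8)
The plan is to prove both inverse identities by a single simultaneous structural induction, using that each side is \emph{two-sorted}. Black-white trees split into the black-rooted family $\B\W_{`(!)}$ and the white-rooted-or-empty family $\B\W_{`(?)}$, exactly the two nonterminals of their grammar; the zigzag-free trees split, as in Section~\ref{sec:zigzag}, into $\B\Z_1$ and $\B\Z_2$; and the two auxiliary maps $\bztobw_{`(!)}$, $\bztobw_{`(?)}$ are attached to those two sorts. Before looking at the composites I would first establish a \emph{sort-correctness} lemma: $\bwtobz$ carries $\B\W_{`(!)}$ into $\B\Z_1$ and $\B\W_{`(?)}$ into $\B\Z_2$, and dually $\bztobw_{`(!)}$ carries $\B\Z_1$ into $\B\W_{`(!)}$ while $\bztobw_{`(?)}$ carries $\B\Z_2$ into $\B\W_{`(?)}$. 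This is a direct induction along the grammars, but it is what guarantees that in each composite the outer map is applied on the sort where the intended defining clause actually fires, so that the ensuing case analysis is exhaustive and the clauses are mutually exclusive.

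For $\bztobw_{`(!)}\circ\bwtobz = id_{\B\W_{`(!)}}$ I would strengthen the goal to the pair ``$\bztobw_{`(!)}(\bwtobz(T))=T$ for $T\in\B\W_{`(!)}$ and $\bztobw_{`(?)}(\bwtobz(T))=T$ for $T\in\B\W_{`(?)}$'' and prove the two statements together by induction on the number of nodes of $T$. For each grammar production I compute $\bwtobz(T)$ by its matching clause, feed the result to the appropriate $\bztobw$ map, and check that the matching $\bztobw$ clause rebuilds $T$, the inner subtrees being restored by the induction hypothesis. The clauses that carry the argument are the two size-changing ones flagged in the margins of the definitions: the fourth clause of $\bwtobz$, which deletes the black root sitting above a white-rooted child, must be undone by the $\bztobw_{`(!)}$ clause that reinserts a $`(!)$ on the leftmost branch; dually the last clause of $\bwtobz$, which plants a fresh leaf $`*$, is erased by $\bztobw_{`(?)}(`*)=\Box$. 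The reverse identity $\bwtobz\circ\bztobw_{`(!)}=id_{\B\Z}$, strengthened symmetrically to ``$\bwtobz(\bztobw_{`(!)}(U))=U$ for $U\in\B\Z_1$ and $\bwtobz(\bztobw_{`(?)}(U))=U$ for $U\in\B\Z_2$'', is handled by the mirror-image induction on the size of $U$.

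The main obstacle is that neither map is a clean one-constructor-per-clause recursion: both branch on the top \emph{two} levels of the tree -- the root together with the colour and arity of its left child -- and two clauses change the size of the tree, so the clauses of $\bwtobz$ do not line up one-for-one with the grammar productions and must be paired by hand with those of $\bztobw_{`(!)}$ and $\bztobw_{`(?)}$. The real work is therefore twofold: first, to pin down the $\Box$-child conventions so that the clause sets are genuinely exhaustive and non-overlapping on each sort (deciding, for instance, exactly which white-binary trees fall under the fifth versus the last clause of $\bwtobz$); and second, to make precise the balance invariant that the authors point to, namely that along the leftmost branch $\bwtobz$ trades one $`(!)$ node for one $`*$ node in its size-changing clauses while $\bztobw$ performs the inverse trade. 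Formulating this balance as a property preserved by every recursive call is the step that turns the apparently mismatched size-changing clauses into true inverses, and it is the one place where the verification really has to be checked rather than merely pattern-matched.
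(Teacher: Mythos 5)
The paper states this proposition without proof --- the only evidence it offers is the accompanying Haskell implementation --- so there is no official argument to measure your plan against. Your strategy is nevertheless the right one, and it is the one any complete proof would have to follow: strengthen both identities to their two-sorted forms (pairing $\bztobw_{`(!)}\circ\bwtobz$ on $\B\W_{`(!)}$ with $\bztobw_{`(?)}\circ\bwtobz$ on $\B\W_{`(?)}$, and dually for $\B\Z_1$ and $\B\Z_2$), prove sort preservation first so that the case analyses are exhaustive, and then run a mutual induction in which the size-changing clauses are paired off: the clause of $\bwtobz$ that erases a black node above a white child is undone by the clause of $\bztobw_{`(!)}$ that reinserts one, and the clause that plants a fresh leaf $`*$ is undone by $\bztobw_{`(?)}(`*)=\Box$.

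The reservation is that your proposal stops exactly where the work begins, and that work is not mere bookkeeping. As printed, the clause sets are \emph{not} exhaustive on their sorts: for instance a white node whose left child is a white \emph{unary} node matches neither the fifth clause of $\bwtobz$ (which requires the left child to be a white binary node) nor the sixth (which requires it to be empty); symmetrically, the printed clauses of $\bztobw_{`(!)}$ and $\bztobw_{`(?)}$ carry side conditions that leave some shapes of zigzag-free trees unhandled. So before the induction can be run at all, the definitions themselves must be completed (this is what the Haskell code actually does), and only then can one check that each completed clause of one map is inverted by a unique clause of the other. Your ``balance invariant'' along the leftmost branch is the right guiding idea for that check, but as stated it is a heuristic rather than a property precise enough to cite in an inductive step. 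In short: correct architecture, correctly diagnosed obstacles, but the case analysis that constitutes the proof is still owed.
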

\subsection{Haskell code}
\label{sec:haskell-bzToBw}

%%%%%%%%%%%%%%%%%%%%%%%%%%%%%%%%%%%%%%%
%% M. B: Left for future reference ! %%
%%%%%%%%%%%%%%%%%%%%%%%%%%%%%%%%%%%%%%%
% \begin{minted}{haskell}
% -- Bijection from zigzag frees to black-white
% ----- to black-white black rooted
% bzToBw_black  :: BzTree -> BwTree
% bzToBw_black  BzLeaf = BwLeaf
% bzToBw_black (Node BzLeaf BzLeaf) = Black BwLeaf BwLeaf
% bzToBw_black (Node BzLeaf t@(Node _ _)) = Black (bzToBw_black t) BwLeaf
% bzToBw_black (Node tLeft@(Node _ _) tRight) = 
%     Black (White (bzToBw_white tLeft) (bzToBw_black tRight)) BwLeaf -- one adds a black

% ----- to black-white white rooted
% bzToBw_white :: BzTree -> BwTree
% bzToBw_white (Node tLeft@(Node _ _) tRight) = 
%     White (bzToBw_white tLeft) (bzToBw_black tRight)
% bzToBw_white (Node BzLeaf BzLeaf) = BwLeaf -- one removes a node
% \end{minted}

\inputminted{haskell}{\codeDir/BwToBz.hs}
\section{The bijections between lambda terms and zigzag free trees}

\subsection{From lambda terms to zigzag free trees}
Lest us call $\ltobz$ this bijection. It is described in Figure~\ref{fig:ltobz}
\begin{figure}[!th]
  \centering
  \begin{eqnarray*}
    \ltobz(\zerodot)  &=& `*\\
    \ltobz(S(n))  &=& %
    \raisebox{.4cm}{\xymatrix @=.05pc{\ltobz(n)\ar@{-}[dr]\\ &{`*} }}\\\\
    \ltobz(`l(M))  &=& %
    \raisebox{.4cm}{\xymatrix @=.05pc{&\ltobz(M)\ar@{-}[dl]\\{`*}}}\\\\
    \ltobz(M\,\zerodot) &=& \raisebox{.4cm}{\xymatrix @=.05pc{&{`*}\ar@{-}[dr]\ar@{-}[dl]\\ {`*}&&\ltobz(M) }}\\\\
    \ltobz(M\,S(n))  &=& %
    \raisebox{.4cm}{\xymatrix @=.05pc{\ltobz(n)\ar@{-}[dr]\\&{`*}\ar@{-}[dr]\ar@{-}[dl]\\ `*&&\ltobz(M)}}\\\\
    \ltobz(M_1\,M_2)  &=& %
    \raisebox{.4cm}{\xymatrix @=.05pc{&&t\ar@{-}[dl]\\&{`*}\ar@{-}[dr]\ar@{-}[dl]\\ `*&&\ltobz(M_1)}}\qquad \text{when~} \ltobz(M_2) =
    \raisebox{.4cm}{\xymatrix@=.05pc{&t\ar@{-}[dl]\\ `*}}
  \end{eqnarray*}
  
  \caption{The bijection $\ltobz$ from lambda terms to zigzag free trees}
  \label{fig:ltobz}
\end{figure}

\subsection{From zigzag free terms to lambda terms}
The bijection called $\bztol$ is defined in Figure~\ref{fig:bztol}.
\begin{figure}[!th]
  \centering
  \begin{eqnarray*}
    \bztol(`*) &=& \zerodot \\% 1
    \bztol\left(\raisebox{.4cm}{\xymatrix @=.1pc{n\ar@{-}[dr]\\&`*}}\right)%
    &=& %
    S (\bztol\left(n\right))%
    \\ % 2
    \bztol\left(\raisebox{.4cm}{\xymatrix @=.1pc{&`*\ar@{-}[dl]\\`*}}\right)%
    &=& %
    `l\zerodot%
    \\ % 3
    \bztol\left(\raisebox{.4cm}{\xymatrix @=.1pc{&`*\ar@{-}[dl]\ar@{-}[dr]\\`*&&T}}\right)%
    &=& %
    \bztol(T)\,\zerodot%
    \\ % 4
    \bztol\left(\raisebox{.4cm}{\xymatrix @=.1pc{n\ar@{-}[dr]\\&`*\ar@{-}[dl]\\`*}}\right)%
    &=& %
    `l\, \bztol\left(\raisebox{.4cm}{\xymatrix @=.1pc{n\ar@{-}[dr]\\&`*}}\right)%
    \\ % 5
    \bztol\left(\raisebox{.4cm}{\xymatrix @=.1pc{n\ar@{-}[dr]\\&`*\ar@{-}[dl]\ar@{-}[dr]\\ `* &&T}}\right)%
    &=& %
    \bztol(T)~`l\, \bztol\left(\raisebox{.4cm}{\xymatrix @=.1pc{n\ar@{-}[dr]\\&`*}}\right)%
    \\ % 6
    \bztol\left(\raisebox{.4cm}{\xymatrix @=.1pc{&&T\ar@{-}[dl]\\&`*\ar@{-}[dl]\\`*}}\right)%
    &=& %
    `l\, \bztol\left(\raisebox{.4cm}{\xymatrix @=.1pc{&T\ar@{-}[dl]\\`*}}\right)%
    \\ % 7
    \bztol\left(\raisebox{.4cm}{\xymatrix @=.1pc{&&T_2\ar@{-}[dl]\\&`*\ar@{-}[dl]\ar@{-}[dr]\\`*&&T_1}}\right)%
    &=& %
    \bztol(T_1)~\bztol\left(\raisebox{.4cm}{\xymatrix @=.1pc{&T_2\ar@{-}[dl]\\`*}}\right)
  \end{eqnarray*}
  \caption{The bijection $\bztol$}
  \label{fig:bztol}
\end{figure}

\begin{prop}
  $\ltobz `(?)\bztol = id_{\B\Z}$ and $\bztol `(?)\ltobz = id_{`L}$.
\end{prop}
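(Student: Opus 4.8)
The plan is to prove the two identities by induction, but with \emph{different} induction principles for the two directions, because $\ltobz$ and $\bztol$ recurse in structurally different ways. For $\bztol\circ\ltobz = id_{`L}$ I would use ordinary structural induction on the de Bruijn term $M$: each clause of $\ltobz$ calls itself only on proper subterms (on $n$ inside $S(n)$, on $M$ inside $`l M$, and on the two operands of an application, the guard clause using only the already-computed value $\ltobz(M_2)$), so the induction on terms is available and $\ltobz$ is patently total. For $\ltobz\circ\bztol = id_{\B\Z}$ the situation is more delicate, since several clauses of $\bztol$ recurse not on a genuine subtree but on a tree obtained by \emph{surgery}: for instance the clause returning $`l\,\bztol\!\left(\ri{n}{`*}\right)$ is applied to a strictly larger tree and recurses on $\ri{n}{`*}$, which is the input with its bottom-left $`*$ deleted. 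Here I would run a well-founded induction on the number of nodes of the tree, after first checking that each such surgical call strictly decreases the node count, so that the recursion is well-founded and $\bztol$ is total on $\B\Z$.

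Before the inductions proper I would isolate three auxiliary facts that make the case analysis go through. First, \textbf{well-definedness}: the guards of $\ltobz$ (is the right operand $\zerodot$, a successor, or neither?) and the shape patterns on the left-hand sides of $\bztol$ must be shown to partition $`L$ and $\B\Z$ respectively, so that exactly one clause fires on each input. Second, a \textbf{shape-characterization lemma} recording, for each top-level form of a term $N$, the shape of $\ltobz(N)$; in particular I would show that the guard in the last application clause holds precisely when the right operand is neither $\zerodot$ nor a successor, i.e.\ is itself an abstraction or an application, which is exactly what makes the three application clauses of $\ltobz$ mutually exclusive and tells us which clause of $\bztol$ fires on $\ltobz(M_1 M_2)$. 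Third, \textbf{image membership}: $\ltobz(M)$ is always zigzag-free (lies in $\B\Z$) and $\bztol(T)$ is always a well-formed term, so that both round trips type-check.

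With these in hand the inductive steps amount to pairing each clause of one function with the clause of the other that undoes it. For $\bztol\circ\ltobz$ one reads off $\ltobz(M)$ for $M$ equal to $\zerodot$, $S(n)$, $`l M'$, $M'\,\zerodot$, $M'\,S(n)$, and a general application, uses the shape lemma to see which $\bztol$-clause matches, and then closes each case with the induction hypothesis; the zero/successor/general split of the application cases on the term side corresponds exactly to the split of the two-child clauses on the tree side. For $\ltobz\circ\bztol$ one carries out the symmetric computation over the eight clauses of $\bztol$, the surgical clauses being precisely those in which the size-based induction (rather than subtree induction) is invoked.

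The step I expect to be the main obstacle is exactly this surgical recursion in $\bztol$ together with the deep, shape-dependent pattern matching: showing simultaneously that the clauses are exhaustive and disjoint and that the surgery on the tree side mirrors the peeling of a leading $`l$ or an application head on the term side is where the bookkeeping is heaviest. A cleaner alternative, which I would use as a cross-check or even as the actual proof, is to recognise that $\ltobz$ and $\bztol$ factor through the black-white trees: $\ltobz$ is the map that first applies $\ltobw$ and then $\bwtobz$, while $\bztol$ first applies $\bztobw_{`(!)}$ and then $\bwtol$. Granting these two factorisations, the proposition is immediate from the already-established facts that $\ltobw$ and $\bwtol$ are mutually inverse and that $\bwtobz$ and $\bztobw_{`(!)}$ are mutually inverse: in each round trip the two inner passes through $\B\W_{`(!)}$ cancel, leaving the identity. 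The remaining work then shrinks to verifying that the explicitly defined $\ltobz$ (resp.\ $\bztol$) agrees clause by clause with this composite, which is mechanical once the shape lemma is available.
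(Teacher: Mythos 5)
The paper states this proposition with no proof at all, so there is no argument of record to measure yours against; judged on its own terms, your plan is sound. Your ``cleaner alternative'' --- factoring $\ltobz$ as $\bwtobz\circ\ltobw$ and $\bztol$ as $\bwtol\circ\bztobw_{`(!)}$, then cancelling the two passes through $\B\W_{`(!)}$ via the two earlier propositions --- is the route that best matches the architecture of the paper, which introduces black-white trees precisely as the intermediate structure and whose worked examples (\textsf{K}, \textsf{S}, $`w$, \textsf{Y}) are consistent across all three representations; be aware, though, that those earlier propositions are themselves only asserted in the paper, so this route relocates rather than eliminates the inductive work, and the clause-by-clause agreement of the composite with the displayed definition of $\ltobz$ still has to be checked. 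Your primary route is equally viable, and you correctly isolate the two points where a naive structural induction would break: the clauses of $\bztol$ recurse on trees obtained by surgery (deleting or detaching the bottom node of a spine) rather than on genuine subtrees, so one needs well-founded induction on node count together with the observation that every such call strictly shrinks the tree; and one needs the shape lemma showing that the guard of the final application clause of $\ltobz$ fires exactly when $M_2$ is an abstraction or an application, which is what makes the three application clauses disjoint and exhaustive and determines which $\bztol$ clause undoes each $\ltobz$ clause. With those two lemmas in place the case-by-case cancellation goes through, so I see no gap in the proposal.
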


\subsection{Examples}
Let us look at the bijection on classical examples, namely \textsf{K}, \textsf{S},
$`w$ and \textsf{Y}:
\begin{displaymath}
  \begin{array}{l@{\qquad\qquad}l@{\qquad\qquad}l}
    \textrm{for~} \mathsf{K}  %
    \xymatrix@=.2pc{& `*\ar@{-}[dr]\\%
      &&`*\ar@{-}[dl]\\
      & `*\ar@{-}[dl]\\
      `*}%
    &
    \textrm{for~} \mathsf{S} 
    \xymatrix@=.2pc{%
&&&&&`*\ar@{-}[dl]\ar@{-}[dr]\\
&&&&`*\ar@{-}[dl]\ar@{-}[dr]&&`*\ar@{-}[dr]\\
&&&`*\ar@{-}[dl]&&`*\ar@{-}[dr]\ar@{-}[dl]&&`*\\
&&`*\ar@{-}[dl]&&`*&&`*\ar@{-}[dr]\\
&`*\ar@{-}[dl]&&&&&&`*\ar@{-}[dr]\\
`*&&&&&&&&`*
    }
  \end{array}
\end{displaymath} 

 \begin{displaymath}
      \begin{array}{l@{\qquad\qquad}l@{\qquad\qquad}l}
        \textrm{for~} \mathsf{`w}  %
        \xymatrix@=.2pc{
          &&&`*\ar@{-}[dl]\ar@{-}[dr]\\
          &&`*\ar@{-}[dl]&&`*\\
          &`*\ar@{-}[dl]\ar@{-}[dr]\\
          `*&&`*\ar@{-}[dl]\ar@{-}[dr]\\
          &`*\ar@{-}[dl]&&`*\\
          `*
        }
        &
        \textrm{for~} \mathsf{Y}  %
        \xymatrix@=.2pc{
          &&&`*\ar@{-}[dl]\ar@{-}[dr]\\
          &&`*\ar@{-}[dl]\ar@{-}[dr]&&`*\\
          &`*\ar@{-}[dl]\ar@{-}[dr]&&`*\ar@{-}[dr]\\
          `*&&`*\ar@{-}[dl]\ar@{-}[dr]&&`*\\
          &`*\ar@{-}[dl]\ar@{-}[dr]&&`*\\
          `*&&`*\ar@{-}[dr]\\
          &&&`*
        }
      \end{array}
    \end{displaymath}

\subsection{Haskell code}
\label{sec:haskell-lToBz}

\inputminted{haskell}{\codeDir/LToBz.hs}

We leave the straightforward implementation of \textsf{BzToL} from $`l$-terms
to Zigzag free trees to the reader.

\section{Normal forms}
\label{sec:nf}
We are now interested in normal forms, that are terms irreducible by $`b$~reduction
that are also terms which do not have subterms of the form $(`l
\mathsf{M})\,\mathsf{N}$.

There are three associated classes: $\N$ (the normal forms), $\M$ (the neutral
terms, which are the normal forms without head abstractions) and $\D$ (the de Bruijn
indices) :
\begin{eqnarray*}
  \N &=& \M + `l \N\\
  \M &=& \M \N + \D  \\
  \D &=& S \D + \zerodot.
\end{eqnarray*}
Let us call $N$ the generating function of $\N$, $M$ the generating function for $\M$
and $D$ the generating function for $\D$.  The above equations yield the equations
for the generating functions:
\begin{eqnarray*}
  N &=& M + z N\\
  M &=& z M N + D\\
  D &=& z D + z
\end{eqnarray*}
One shows that
\begin{eqnarray*}
  M &=& \frac{1-z-\sqrt{(1+z)(1-3z)}}{2z}\\
N &=& \frac{M}{1-z}
\end{eqnarray*}
$M$ is the generating function of Motzkin trees (see~\cite{flajolet08:_analy_combin} p.~396).
% Clearly
% \begin{eqnarray*}
%   D &=& \frac{z}{1-z}\\
%   M &=& \frac{D}{1-z N}\\
%   N &=&\frac{D}{1-z N} + z N
% \end{eqnarray*}
% from which one gets
% \begin{eqnarray*}
%   z(1-z) N^2 - (1-z) N + D &=&0\\
%   z(1-z)^2 N^2 - (1-z)^2 N + z &=&0
% \end{eqnarray*}
% with discriminant
% \begin{eqnarray*}
%   `D_N &=& (1-z)^4 - 4z^2(1-z)^2\\
%   &=& (1-z)^2(1-2z-3z^2)\\
%   &=& (1-z)^2(1+z)(1-3z).
% \end{eqnarray*}
% and
% \begin{eqnarray*}
%   N &=& \frac{(1-z)^2-\sqrt{`D_N}}{2z^2(1-z)^4}\\
%   &=&   \frac{(1-z)-\sqrt{(1+z)(1-3z)}}{2z^2 (1-z)^3}.
% \end{eqnarray*}
% Hence the pole of$N$ that are the roots of $`D_N$ are $1$, $-1$ and $1/3$ and the smallest pole  $`r_n = 1/3$ and normal forms grow like $3^n$. Therefore the
% set of normal forms is of density zero in the set of terms.

\section{The bijections between Motzkin trees and neutral normal forms}
In this section we exhibit a bijection between Motzkin trees and neutral normal forms
as suggested by the identity between their genrating functions. 
Let $u_n$ denote the unary Motzkin path of height $n$. We start with defining two auxiliary operations \textsf{UnToL} and \textsf{UnToD},
 translating unary Motzkin paths into $\lambda$-paths and DeBruijn indices, respectively.

\begin{figure}[!th]
  \centering
    \begin{eqnarray*}
        \untol\left(\bullet\right) &=& \lambda \\
        %\untol\left(\raisebox{.4cm}{\upath{\bullet}{u_n}}\right) &=& \lambda~\untol\left(u_n\right) \\
        \untol\left(\raisebox{.4cm}{\upath{\bullet}{u_n}}\right) &=& \raisebox{.6cm}{\upath{\lambda}{\untol\left(u_n\right)}}
    \end{eqnarray*}
  \caption{Operation \textsf{UnToL}}
  \label{fig:untol}
\end{figure}

\begin{figure}[!th]
  \centering
    \begin{eqnarray*}
        \untod\left(\bullet\right) &=& \zerodot \\
        %\untod\left(\raisebox{.4cm}{\upath{\bullet}{u_n}}\right) &=& S~\untod\left(u_n\right) \\
        \untod\left(\raisebox{.4cm}{\upath{\bullet}{u_n}}\right) &=& \raisebox{.6cm}{\upath{S}{\untod\left(u_n\right)}}
    \end{eqnarray*}
  \caption{Operation \textsf{UnToD}}
  \label{fig:untod}
\end{figure}

Using \textsf{UnToL} and \textsf{UnToD} we can now define (Figure~\ref{fig:motone})
the translation \textsf{MoToNe} from Motzkin trees into corresponding neutral terms.

\begin{figure}[!th]
  \centering
  \begin{eqnarray*}
    \motone\left(\raisebox{.9cm}{\bcross{u_n}{t}{t'}}\right)
     &=& \raisebox{.9cm}{\btcross{@}{\motone\left(t\right)}{\untol\left(u_n\right)}{\motone\left(t'\right)}}\\
     \motone\left(\raisebox{.5cm}{\bnode{\bullet}{t}{t'}}\right)
     &=& \raisebox{.6cm}{\bnode{@}{\motone\left(t\right)}{\motone\left(t'\right)}}\\
     \motone\left(u_n\right)
     &=& \untod\left(u_n\right)
  \end{eqnarray*}
  \caption{Translation $\motone$}
  \label{fig:motone}
\end{figure}

\begin{prop}
    $\motone$ is injective.
\end{prop}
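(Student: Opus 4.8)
The plan is to argue by structural induction on the Motzkin tree $T$ (equivalently, by strong induction on its number of nodes), exploiting the unique decomposition of a Motzkin tree along its \emph{maximal unary prefix}. Starting from the root and following unary nodes until one reaches either a leaf or a binary node produces a uniquely determined prefix, so exactly one of the three defining clauses of $\motone$ applies to $T$: either $T$ is a pure unary path $u_n$, or $T$ is a nonempty unary prefix above a binary node with subtrees $t,t'$, or $T$ is a binary node with subtrees $t,t'$. The whole difficulty is to show that from the value $\motone(T)$ one can read off which clause was used, together with enough data to recover the constituents.

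First I would record two elementary facts about the auxiliary maps. The map $\untod$ sends a unary path to an iterated successor $S^{\,j}\zerodot$, the exponent $j$ being the length of the path; hence $\untod$ is injective and its image is exactly the set $\D$ of de Bruijn indices. The map $\untol$ sends a unary path to a stack of abstractions whose number equals the path length, so the count of leading $`l$'s determines the path. Next I would establish the key disjointness of images. The first clause produces an element of $\D$, which is neither an application nor headed by $`l$. The second and third clauses both produce an application $M_1\,M_2$ (an element of the $\M\N$ summand of $\M$), with $M_1 = \motone(t)$ neutral; but the third clause yields a \emph{neutral} right argument $M_2 = \motone(t')$, whereas the second clause yields a right argument of the form $`l^{k}\,\motone(t')$ with $k\ge 1$, i.e.\ one headed by an abstraction (here $k\ge 1$ precisely because the unary prefix in the second clause is nonempty, else we would be in the third). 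Since a neutral term never begins with $`l$, the three images are pairwise disjoint, and the top-level shape of $\motone(T)$---an index, an application with neutral right argument, or an application with abstraction-headed right argument---pins down the clause unambiguously.

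With the clause identified, injectivity follows clause by clause. If $\motone(T_1)=\motone(T_2)$ is a de Bruijn index, both trees are pure unary paths and equality of $S^{\,j}\zerodot$ forces equal lengths by injectivity of $\untod$. If the common value is an application with neutral right argument, both trees are binary nodes; equating left and right arguments and applying the induction hypothesis to the strictly smaller subtrees gives $t_1=t_2$ and $t_1'=t_2'$. If the common value is an application whose right argument is headed by $k\ge 1$ abstractions, both trees have the second form: the count $k$ recovers the unary prefix through injectivity of $\untol$, the left argument recovers $t$, and the neutral core left after stripping the $k$ abstractions recovers $t'$, both via the induction hypothesis. In every case $T_1=T_2$, which is the claim.

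The main obstacle is the disjointness and top-level inversion step of the second paragraph: one must verify carefully that the three defining clauses have pairwise disjoint images and that the outermost structure of a neutral term determines which clause produced it, since this is exactly what licenses the case split before the induction hypothesis can be invoked. Once that is in place, the injectivity of $\untod$ and $\untol$ and the recursive descent into the subtrees are routine.
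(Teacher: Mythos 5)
Your proof is correct, but it takes a genuinely different route from the paper's. The paper disposes of the proposition in one sentence, observing that $\motone$ preserves the exact numbers of unary and binary nodes; this is really a counting/size-preservation remark that leans on the surrounding material (the equality of the generating functions of Motzkin trees and neutral terms, and the explicit inverse $\netomo$ constructed immediately afterwards) rather than a self-contained argument. You instead give a direct structural induction: you isolate the maximal unary prefix to see that exactly one clause applies, prove that the three clauses have pairwise disjoint images (index, versus application with $\lambda$-headed right argument, versus application with neutral right argument), and then recover the constituents clause by clause using the injectivity of $\untod$ and $\untol$. This is more work but buys more: the disjointness analysis is exactly what makes the case split in $\netomo$ well defined (in particular it explains the side condition ``$t'$ does not start with a $\lambda$''), so your argument effectively constructs the inverse and proves the stronger bijection statement that the paper defers to the next proposition. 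The one point worth making explicit is the auxiliary fact that the image of $\motone$ lands in the neutral terms $\M$ (so that right arguments produced by the binary-node clause indeed never begin with $\lambda$); you use it, and it needs its own easy induction, but that is routine.
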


\begin{proof}
    The proposition is an easy consequence of the fact that
    $\motone$ preserves the exact number of unary and binary nodes.
\end{proof}

What remains is to give the inverse translation \textsf{NeToMo} from neutral terms to
Motzkin trees (Figure~\ref{fig:netomo}).  Let \textsf{LToUn} and \textsf{DToUn}
denote the inverse functions of \textsf{UnToL} and \textsf{UnToD} respectively. Let
$l_n$ denote the unary $\lambda$-path of height $n$ and $d_n$ denote the $n$-th
DeBruijn index. The translation \textsf{NeToMo} is given by:

\begin{figure}[!th]
  \centering
  \begin{eqnarray*}
    \netomo\left(\raisebox{.9cm}{\btcross{@}{t}{l_n}{t'}}\right)
    &=& \raisebox{.9cm}{\bcross{\ltoun\left(l_n\right)}{\netomo\left(t\right)}{\netomo\left(t'\right)}}\\
    && \qquad \text{where } t' \text{ does not start with a } \lambda\\ % M. Bendkowski: This should be done better...
     \netomo\left(\raisebox{.6cm}{\bnode{@}{t}{t'}}\right)
     &=& \raisebox{.5cm}{\bnode{\bullet}{\netomo\left(t\right)}{\netomo\left(t'\right)}}\\
     \netomo\left(d_n\right)
     &=& \dtoun\left(d_n\right)
  \end{eqnarray*}
  \caption{Translation $\netomo$}
  \label{fig:netomo}
\end{figure}

\begin{prop}
  $\motone `(?)\netomo = id_{\M}$ and $\netomo `(?)\motone = id_{\T}$. % M. Bendkowski: Need a formal description of Motzkin trees (denoted here as T).
\end{prop}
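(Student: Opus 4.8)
The plan is to show that \motone\ and \netomo\ are two-sided inverses by a pair of structural inductions, after first disposing of the purely linear fragments. First I would verify the four auxiliary inverse relations $\ltoun \circ \untol = id$ and $\untol \circ \ltoun = id$ on unary paths and $\lambda$-paths, together with $\dtoun \circ \untod = id$ and $\untod \circ \dtoun = id$ on unary paths and de Bruijn indices. Each of these is an immediate induction on the height $n$: the base case is a single node ($\untol(\bullet)=\lambda$ and $\untod(\bullet)=\zerodot$), and the inductive step strips exactly one constructor on each side. These four facts let me treat $\untol,\untod$ and their inverses as black boxes in the main argument.

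Next I would prove $\netomo \circ \motone = id_{\T}$ by induction on a Motzkin tree $T$, following the three defining clauses of \motone. If $T$ is a pure unary path $u_n$, then $\motone(u_n)=\untod(u_n)$ is a de Bruijn index, and the third clause of \netomo\ returns $\dtoun(\untod(u_n))=u_n$ by Step~1. If $T$ has a binary node at its root with subtrees $t,t'$ (empty unary prefix), then $\motone(T)$ is the application with arguments $\motone(t),\motone(t')$; since $\motone(t')$ is a neutral term it does not begin with $\lambda$, so the second clause of \netomo\ fires and the induction hypothesis on $t$ and $t'$ returns $T$. Finally, if $T$ is a unary path $u_n$ of height $n\ge 1$ sitting above a binary node with subtrees $t,t'$, then $\motone(T)$ is an application whose right argument is the $\lambda$-path $\untol(u_n)$ placed above the neutral term $\motone(t')$; because $\motone(t')$ does not begin with $\lambda$ this $\lambda$-path is maximal, the first clause of \netomo\ fires with $l_n=\untol(u_n)$, and $\ltoun(\untol(u_n))=u_n$ together with the induction hypotheses recovers $T$.

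For the companion identity $\motone \circ \netomo = id_{\M}$ I would offer two routes. The direct one is the symmetric induction on a neutral term, whose grammar $\M=\M\,\N+\D$ and $\N=\M+\lambda\N$ produces exactly the three matching shapes (a de Bruijn index; an application $M\,N$ with $N$ neutral; an application $M\,N$ with $N=\lambda^{n}N'$, $n\ge 1$, $N'$ neutral), so the same clause-by-clause matching applies in reverse. More economically, one can bypass the second induction: \motone\ is already known to be injective and to preserve the numbers of unary and binary nodes, and a check of the size conventions shows it is size-graded; since $\T$ and $\M$ share the generating function $M$, they are equinumerous at every size, so a size-graded injection must be a bijection, whence $\netomo \circ \motone = id_{\T}$ forces $\netomo=\motone^{-1}$ and the remaining identity for free.

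The hard part is not the constructor bookkeeping but the determinism of the pattern matching in \netomo, namely that its first and second clauses never compete. This rests on making two observations explicit: that \motone\ always outputs a neutral term, and that a neutral term never begins with $\lambda$ (it is a de Bruijn index or an application). Consequently the $\lambda$-path peeled off by the first clause of \netomo\ is genuinely maximal and coincides with the unary prefix chosen by \motone, while the residual subtree $\motone(t')$ in both clauses indeed does not start with $\lambda$. Verifying that these maximality conditions line up on both sides, rather than the routine stripping of constructors, is where the care is required.
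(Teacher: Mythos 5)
The paper states this proposition without proof: the only arguments in its vicinity are the one-line proof that $\motone$ is injective (because it preserves the numbers of unary and binary nodes) and an informal paragraph explaining that the case split in the translation is decided by the presence or absence of a \emph{splitting node}. So there is no proof of the paper's to compare yours against; what you have written fills a genuine gap, and it is correct. The four auxiliary inverse identities are indeed immediate inductions, the clause-by-clause induction establishing $\netomo \circ \motone = id_{\T}$ goes through as you describe, and you correctly isolate the one point that actually requires care: the two application clauses of $\netomo$ are mutually exclusive because $\motone$ always returns a neutral term and a neutral term (being an index or an application, by the grammar $\M = \M\,\N + \D$) never begins with $\lambda$, so the $\lambda$-path peeled off by the first clause is maximal and matches the unary prefix chosen by $\motone$. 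This is exactly the mirror image of the paper's informal splitting-node discussion, which resolves the same ambiguity on the Motzkin side. Your second route to $\motone \circ \netomo = id_{\M}$ --- injectivity plus size-gradedness plus the equality of the generating functions of $\M$ and of Motzkin trees --- is also sound, and is arguably the one closest in spirit to what the paper actually supplies, since it reuses the paper's injectivity proposition and its analytic identification of the two counting sequences; the only additional check is that $\motone$ maps size-$n$ trees to size-$n$ terms, which follows from the same node-counting observation used for injectivity. Either route is acceptable: the double induction is self-contained, while the counting argument trades the second induction for facts the paper has already established.
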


\subsection{Haskell code}
\label{sec:haskell-unToD}

\inputminted{haskell}{\codeDir/MoToNe.hs}

In order to translate Motzkin trees to corresponding neutral terms we have to 
consider two cases. Either we are given a Motzkin tree starting with
a unary node or a binary one. The later case is straightforward due to the fact
that binary nodes correspond to neutral term application. Assume we are given a
Motzkin tree starting with a unary path $u_n$ of size $n$. We have to decide
whether the path corresponds to a DeBruijn index or a chain of
$`l$-abstractions. This distinction is uniquely determined by the existence of
the path's \emph{splitting node} -- the binary node directly below $u_n$. If
$u_n$ has a splitting node then it corresponds to a chain of $n$
$`l$-abstractions which will be placed on top of the corresponding right neutral
term constructed recursively from $u_n$'s splitting node. Otherwise, $u_n$
corresponds to the $n$-th DeBruijn index.

We leave the straightforward implementation of \textsf{NeToMo} from neutral
terms to Motzkin trees to the reader.

\section{Head normal forms}
\label{sec:hnf}
We are now interested in the set of head normal forms

\begin{eqnarray*}
  \calH &=& \K + `l \calH\\
  \K &=& \K \calLoo + \D
\end{eqnarray*}
which yields the equations
\begin{eqnarray*}
  H &=& K + z H\\
  K &=& z K \Loo + D
\end{eqnarray*}
and
\begin{eqnarray*}
  K &=& \frac{D}{1 - z \Loo}\\
  H &=&  \frac{K}{1-z}
\end{eqnarray*}
From which we draw
\begin{eqnarray*}
  K &=& z +z \Loo.  
\end{eqnarray*}

This can be explained by the following bijection (see Figure~\ref{fig:bij3} and Figure~\ref{fig:bij4}):
\begin{prop}
  If $P$ is a neutral head normal form, it is of the form:
  \begin{itemize}
  \item $P = \zerodot\, N_1 N_2\ldots N_p$ with $p\ge 1$ (of size $k+1$) then it is
    in bijection with $(`l\, N_1) N_2\ldots N_p$ (of size $k$),
  \item $P = (S n)\, N_1 \ldots N_p$ (of size $k+1$) then it is in bijection with
    $n\, N_1 \ldots N_p$ (of size $k$),
  \item $P = \zerodot$ (of size $1$), treated by the case $z$.
  \end{itemize}
\end{prop}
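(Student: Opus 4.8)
The plan is to first make the structure of a neutral head normal form fully explicit, and then to read the three listed cases off a single case analysis that simultaneously defines a size-shifting bijection onto plain terms. I would start by unfolding the grammar $\K = \K\,\calLoo + \D$. Iterating the left production $\K = \K\,\calLoo$ until it bottoms out in the base $\D$ shows, by induction on size, that every $P \in \K$ is written uniquely as $P = d\,N_1\,N_2\cdots N_p$, where $d \in \D$ is a de Bruijn index (the head), each $N_i \in \calLoo$ is a plain term, the applications associate to the left, and $p \geq 0$. Since a de Bruijn index is either $\zerodot$ or of the form $S n$, splitting on $d$ and on whether $p = 0$ yields exactly the three cases in the statement, the degenerate subcase $d = \zerodot$, $p = 0$ being $P = \zerodot$.

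Next I would define the correspondence $\Phi$ and check that it is a bijection onto $\calLoo$ dropping the size by exactly one, which is the combinatorial content of $K = z + z\Loo$, i.e. $[z^{k+1}]K = [z^k]\Loo$ for $k \geq 1$ with the monomial $z$ accounting for $\zerodot$. On head $\zerodot$ with $p \geq 1$ I send $P$ to $(\lambda N_1)\,N_2\cdots N_p$, and on head $S n$ I send $P$ to $n\,N_1\cdots N_p$. For the inverse $\Psi$, given $T \in \calLoo$ of size $\geq 1$, I peel off the maximal left spine of applications to write $T = U\,N_1\cdots N_p$ with $U$ not an application; then $U$ is necessarily either an abstraction $\lambda W$ or an index $m$, and I set $\Psi(T) = \zerodot\,W\,N_1\cdots N_p$ in the first case and $\Psi(T) = (S m)\,N_1\cdots N_p$ in the second. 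A direct computation shows $\Phi$ and $\Psi$ are mutually inverse.

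The routine but load-bearing checks are the size bookkeeping and the disjointness of the two cases. For the sizes, in the $\zerodot$ case one replaces the subterm $\zerodot\,N_1$ (of size $|N_1| + 2$) by $\lambda N_1$ (of size $|N_1| + 1$), and in the $S n$ case one replaces $S n$ (size $|n| + 1$) by $n$ (size $|n|$); both operations lower the total by exactly one, matching the shift in $K = z + z\Loo$. The main point to get right, and essentially the only place the argument could go wrong, is the well-definedness of $\Psi$ together with the disjointness of the two image families: I must verify that the head $U$ obtained by peeling applications is unambiguous, and that the two cases never collide. This holds because a case-$\zerodot$ image $(\lambda N_1)\,N_2\cdots N_p$ always has an abstraction as its application-spine head, whereas a case-$S n$ image $n\,N_1\cdots N_p$ always has an index as its head (a bare index when $p = 0$). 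Hence the two families partition $\calLoo$, giving surjectivity and injectivity at once, and the bijection validates the equation $K = z + z\Loo$.
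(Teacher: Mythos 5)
Your proposal is correct and follows essentially the same route as the paper, which simply states the case analysis and the size-shifting bijection without further verification; you supply the missing but routine details (the spine decomposition $P = d\,N_1\cdots N_p$, the explicit inverse via peeling the application spine, the size bookkeeping, and the disjointness of the two image families), all of which check out and correctly account for the identity $K = z + z\Loo$.
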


From Theorem~\ref{theorem:approx} we get:
\begin{prop}
  \begin{displaymath}
    [z^{n+1}]K \sim \left(\frac{1}{\rLoo}\right)^n \frac{C}{n^{\frac{3}{2}}}
  \end{displaymath}
  with $ C \doteq 0.60676 ...$ and $\rLoo\doteq 0.29559...$.
\end{prop}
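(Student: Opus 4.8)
The plan is to reduce the asymptotics of $[z^{n+1}]K$ directly to the asymptotics of $[z^n]\Loo$ already obtained in Theorem~\ref{theorem:approx}, exploiting the closed form $K = z + z\Loo$ derived just above. First I would record that, since $K$ and $\Loo$ are formal power series in $z$, coefficient extraction is linear and commutes with multiplication by a monomial. Concretely, for $n \ge 1$,
\begin{displaymath}
  [z^{n+1}]K = [z^{n+1}]z + [z^{n+1}](z\Loo) = 0 + [z^n]\Loo,
\end{displaymath}
because the lone monomial $z$ contributes nothing beyond degree one, while multiplication by $z$ shifts coefficients by one index.

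Having established the exact identity $[z^{n+1}]K = [z^n]\Loo$ for all $n \ge 1$, the second step is simply to invoke Theorem~\ref{theorem:approx}, which gives
\begin{displaymath}
  [z^n]\Loo \sim \left(\frac{1}{\rLoo}\right)^n \frac{C}{n^{3/2}}
\end{displaymath}
with the stated numerical values $C \doteq 0.60676\ldots$ and $\rLoo \doteq 0.29559\ldots$. Substituting the identity yields the claimed estimate for $[z^{n+1}]K$ verbatim, with the same constant $C$ and the same exponential rate $1/\rLoo$.

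There is no genuine obstacle here: the result is a one-line consequence of the algebraic relation $K = z + z\Loo$ together with the previously proved asymptotic. The only point that deserves a word of care is bookkeeping of the index shift, namely checking that the power of $1/\rLoo$ is $n$ and not $n+1$. This is transparent from the identity above: the $\Loo$-coefficient whose index matches $[z^{n+1}]K$ is $[z^n]\Loo$, so the exponent $n$ (rather than $n+1$) is exactly what the theorem supplies, and the subexponential factor $n^{-3/2}$ is inherited unchanged.
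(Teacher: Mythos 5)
Your proposal is correct and is exactly the paper's (implicitly stated) argument: the paper derives $K = z + z\Loo$ immediately before the proposition and then writes ``From Theorem~\ref{theorem:approx} we get,'' which is precisely your coefficient shift $[z^{n+1}]K = [z^n]\Loo$ for $n\ge 1$ followed by an application of the theorem. The index bookkeeping you flag is the only content of the step, and you handle it correctly.
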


The density of a set $\mathcal{A}$ in a set $\mathcal{B}$ is
\begin{displaymath}
  \lim_{n"->"\infty}\frac{A_n}{B_n}
\end{displaymath}
where $A_n$ (respectively $B_n$) are the numbers of elements of $\mathcal{A}$
(respectively of $\mathcal{B}$) of size $n$.  For instance the density of $\K$ in
$\calLoo$ is
\begin{displaymath}
  \lim_{n"->"\infty}\frac{[z^n]K}{[z^n]\Loo};
\end{displaymath}
Hence the proposition.
\begin{prop}
  The density of $\mathcal{K}$ in $\calLoo$ (i.e., the density of neutral head normal
  forms among plain terms) is $\rLoo$.
\end{prop}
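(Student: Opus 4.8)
The plan is to compute the density $\lim_{n\to\infty}\frac{[z^n]K}{[z^n]\Loo}$ directly, using the relation already established in the excerpt, namely $K = z + z\Loo$.

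First I would extract the coefficient asymptotics of $K$. From $K = z + z\Loo$ we get $[z^n]K = [z^{n-1}]\Loo$ for $n \ge 2$ (the lone $z$ term only affects $n=1$). Equivalently, the Proposition just above gives $[z^{n+1}]K \sim \left(\frac{1}{\rLoo}\right)^n \frac{C}{n^{3/2}}$, so shifting the index yields
\begin{displaymath}
  [z^n]K \sim \left(\frac{1}{\rLoo}\right)^{n-1} \frac{C}{(n-1)^{3/2}}.
\end{displaymath}
Meanwhile Theorem~\ref{theorem:approx} gives $[z^n]\Loo \sim \left(\frac{1}{\rLoo}\right)^{n} \frac{C}{n^{3/2}}$.

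The key step is then to form the ratio and take the limit. Dividing the two asymptotic expressions, the common factor $C$ cancels, the factor $\left(\frac{1}{\rLoo}\right)^{n-1}$ divided by $\left(\frac{1}{\rLoo}\right)^{n}$ contributes exactly $\rLoo$, and the polynomial factor contributes $\left(\frac{n}{n-1}\right)^{3/2} \to 1$. Hence
\begin{displaymath}
  \lim_{n\to\infty}\frac{[z^n]K}{[z^n]\Loo}
   = \rLoo \cdot \lim_{n\to\infty}\left(\frac{n}{n-1}\right)^{3/2} = \rLoo,
\end{displaymath}
which is exactly the claimed density.

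I do not expect a genuine obstacle here, since the heavy lifting (the singularity analysis producing the $\left(1/\rLoo\right)^n n^{-3/2}$ form with the \emph{same} dominant singularity $\rLoo$ and the \emph{same} constant $C$ for both $K$ and $\Loo$) is already done in Theorem~\ref{theorem:approx} and the preceding Proposition. The one point requiring a little care is the index shift: because $K$ and $\Loo$ have their coefficients offset by one, the ratio is not trivially $1$ but picks up precisely the geometric factor $\rLoo$ coming from the extra power of $1/\rLoo$. This is the whole content of the statement, and it explains intuitively why the density equals the radius of convergence $\rLoo$: a neutral head normal form of size $n$ corresponds bijectively (via the Proposition relating $P$ to its reduced form) to a plain term of size $n-1$, and the number of plain terms grows by a factor of roughly $1/\rLoo$ per unit of size, so the one-size discrepancy dilutes the count by exactly $\rLoo$.
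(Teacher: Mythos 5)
Your proposal is correct and follows exactly the paper's route: the paper derives $K = z + z\Loo$, records $[z^{n+1}]K \sim (1/\rLoo)^n C/n^{3/2}$ as a corollary of Theorem~\ref{theorem:approx}, and concludes the density statement from the ratio of the shifted asymptotics, which is precisely your index-shift computation. Your write-up just makes explicit the cancellation of $C$ and of the polynomial factor that the paper leaves implicit in its ``Hence the proposition.''
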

\begin{prop}
   \begin{displaymath}
    [z^n]H \sim \left(\frac{1}{\rLoo}\right)^n \frac{C_H}{n^{\frac{3}{2}}}
  \end{displaymath}
  with $ C_H \doteq 0.254625911836762946...$ 
\end{prop}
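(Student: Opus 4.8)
The plan is to read off the asymptotics of $[z^n]H$ directly from Theorem~\ref{theorem:approx}, by splitting $H$ into a part that carries the singularity (supplied by $\Loo$) and an analytic multiplier. From the equations established just above, $K = z + z\Loo = z(1+\Loo)$ and $H = K/(1-z)$, so
\begin{displaymath}
  H = \frac{z\,(1+\Loo)}{1-z}.
\end{displaymath}
The factor $\frac{z}{1-z}$ is analytic on the open disc $|z|<1$, hence in particular in a neighbourhood of the dominant singularity $\rLoo \doteq 0.29559\ldots$ of $\Loo$; and the summand $1$ in $1+\Loo$ is entire. Consequently the only singularity of $H$ in $|z|\le\rLoo$ is the one inherited from $\Loo$ at $\rLoo$.

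First I would recall, from the form of $\Loo$ used in the proof of Theorem~\ref{theorem:approx}, that near $\rLoo$ the function $\Loo$ has a standard algebraic singularity whose leading singular term is proportional to $\sqrt{1-z/\rLoo}$ (this is precisely what produces the factor $n^{-3/2}$ in $[z^n]\Loo \sim (1/\rLoo)^n C/n^{3/2}$). Moreover $\rLoo$ is the \emph{unique} dominant singularity, since the remaining roots of $R(z)=z^3+z^2+3z-1$ lie strictly outside the circle $|z|=\rLoo$ — a fact already needed to invoke Theorem~VI.1 of \cite{flajolet08:_analy_combin}. Thus $H$ is $\Delta$-analytic around $\rLoo$ and the same transfer machinery applies.

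The key step is the principle of \emph{multiplication by an analytic factor} in singularity analysis: if $g$ is analytic at $\rLoo$ and $f$ has there a standard algebraic singularity, then $[z^n](g\,f)\sim g(\rLoo)\,[z^n]f$, because near $\rLoo$ one writes $g(z)=g(\rLoo)+O(z-\rLoo)$ and the correction only contributes a strictly less singular term. Applying this with $g(z)=\frac{z}{1-z}$ and $f=1+\Loo$ (whose singular part coincides with that of $\Loo$) gives
\begin{displaymath}
  [z^n]H \;\sim\; \frac{\rLoo}{1-\rLoo}\,[z^n]\Loo \;\sim\; \left(\frac{1}{\rLoo}\right)^n \frac{1}{n^{3/2}}\cdot\frac{\rLoo}{1-\rLoo}\,C,
\end{displaymath}
which identifies the constant as $C_H = \frac{\rLoo}{1-\rLoo}\,C$.

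Finally I would confirm the numerical value: with $\rLoo \doteq 0.29559774\ldots$ and $C\doteq 0.60676\ldots$ one gets $\frac{\rLoo}{1-\rLoo}\doteq 0.41963\ldots$ and hence $C_H \doteq 0.254625911836762946\ldots$, matching the statement. I expect the only genuinely delicate point to be the justification of the analytic-factor transfer — that is, verifying that $\frac{z}{1-z}$ is analytic throughout a $\Delta$-domain anchored at $\rLoo$ and that $\rLoo$ stays the unique dominant singularity of the product; both follow at once from $\rLoo<1$ together with the singularity analysis already carried out for $\Loo$.
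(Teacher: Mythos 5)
Your proposal is correct and follows essentially the same route as the paper: the paper's proof likewise reads off the asymptotics by singularity analysis of $H=\frac{z(1+\Loo)}{1-z}$ at $\rLoo$, the factor $\frac{1}{1-z}$ merely replacing one $\rLoo$ by $1-\rLoo$ in the constant, which is exactly your $C_H=\frac{\rLoo}{1-\rLoo}\,C$. The paper states the constant in the equivalent closed form $C_H=\frac{-\sqrt{\rLoo Q(\rLoo)/(1-\rLoo)}}{2(1-\rLoo)\Gamma(-\frac{1}{2})}$, and your identity $C_H/C=\rLoo/(1-\rLoo)$ is precisely what the paper uses in the very next proposition on the density of head normal forms.
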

\begin{proof}
  The proof is like this of Theorem~\ref{theorem:approx} with
  \begin{displaymath}
    C_H = \frac{-\sqrt{\rLoo \frac{Q(\rLoo)}{1-\rLoo}}}{2(1-\rLoo)`G(-\frac{1}{2})}
    \doteq 0.254625911836762946...
  \end{displaymath}
\end{proof}
Figure~\ref{fig:approxH} compares the coefficients of $H$ with its approximation. 
\tableValuesHdNf

\begin{prop}
  The density of $\mathcal{H}$ in $\calLoo$ (i.e., the density of  head normal
  forms among plain terms) is $\rLoo/(1-\rLoo) \doteq 0.41964337760707887...$
\end{prop}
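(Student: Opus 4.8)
The plan is to read off the density directly from the two coefficient asymptotics that are already in hand. By definition, the density of $\calH$ in $\calLoo$ is
\[
  \lim_{n\to\infty}\frac{[z^n]H}{[z^n]\Loo}.
\]
The preceding proposition gives $[z^n]H \sim \left(1/\rLoo\right)^n C_H\, n^{-3/2}$ and Theorem~\ref{theorem:approx} gives $[z^n]\Loo \sim \left(1/\rLoo\right)^n C\, n^{-3/2}$, both expansions being controlled by the \emph{same} dominant singularity $z=\rLoo$. Consequently the exponential factor $(1/\rLoo)^n$ and the polynomial factor $n^{-3/2}$ cancel in the quotient, and the limit collapses to the ratio of constants $C_H/C$.

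The second step is to evaluate $C_H/C$. Writing out the two constants as defined,
\[
  C = \frac{-\sqrt{\rLoo\, Q(\rLoo)/(1-\rLoo)}}{2\rLoo\,`G(-\tfrac{1}{2})},
  \qquad
  C_H = \frac{-\sqrt{\rLoo\, Q(\rLoo)/(1-\rLoo)}}{2(1-\rLoo)\,`G(-\tfrac{1}{2})},
\]
one sees that the whole prefactor $-\sqrt{\rLoo\, Q(\rLoo)/(1-\rLoo)}\big/\bigl(2\,`G(-\tfrac{1}{2})\bigr)$ is shared. Dividing, everything cancels except the two denominators, so that $C_H/C = \rLoo/(1-\rLoo)$. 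Numerically $\rLoo/(1-\rLoo) \doteq 0.29559.../0.70441... \doteq 0.41964...$, matching the claim.

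I expect no real obstacle here: the analytic labour was spent in establishing the two singular expansions, and what remains is a one-line cancellation. The only point worth checking is that the shared prefactor is genuinely identical in the two formulas, with the same branch and sign conventions; this is immediate, since $C_H$ was obtained from $C$ (equivalently from $\widetilde{C}$) precisely by inserting the single factor $\rLoo/(1-\rLoo)$. Indeed, the same conclusion can be reached structurally, bypassing the explicit constants: from $K = z + z\Loo$ and $H = K/(1-z)$ we get $H = z(1+\Loo)/(1-z)$, and the factor $z/(1-z)$ is analytic and nonvanishing at the dominant singularity $z=\rLoo$ (as $\rLoo<1$), taking the value $\rLoo/(1-\rLoo)$ there. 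By singularity analysis, multiplying $\Loo$ by a function analytic at $\rLoo$ rescales the singular part, and hence the coefficient asymptotics, by exactly that value, which again yields density $\rLoo/(1-\rLoo)$ and explains transparently why $1/(1-\rLoo)$ replaces $1/\rLoo$.
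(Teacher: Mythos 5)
Your proposal is correct and takes essentially the same route as the paper, whose entire proof is the observation that $C_H/C = \rLoo/(1-\rLoo)$; you simply spell out the cancellation of the shared prefactor in the two constants and of the common factors $(1/\rLoo)^n n^{-3/2}$ in the quotient of coefficients. The closing structural remark, that $H = z(1+\Loo)/(1-z)$ with $z/(1-z)$ analytic and equal to $\rLoo/(1-\rLoo)$ at the dominant singularity, is a nice additional sanity check but not a departure from the paper's argument.
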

\begin{proof}
  Actually $\frac{C_H}{C} = \frac{\rLoo}{(1-\rLoo)}.$
\end{proof}
\section{Terms containing specific subterms}
\label{sec:tcss}

Consider a term $\mathsf{M}$ of size $p$ and the set $\mathcal{T}$ of terms that
contain $\mathsf{M}$ as subterm.
\begin{eqnarray*}
  \mathcal{T} &=& t + `l \mathcal{T} + \mathcal{T} \calLoo + \calLoo
  \mathcal{T} - \mathcal{T} \mathcal{T}
\end{eqnarray*}
which yields
\begin{eqnarray*}
  T = z^p + z T + 2 z T \Loo - z T^2
\end{eqnarray*}
and
\begin{eqnarray*}
  z T^2 + (1 - 2z \Loo - z)T - z ^p &=& 0.
\end{eqnarray*}
Notice that
\begin{eqnarray*}
  1 - 2z \Loo - z &=& \sqrt{\DLoo}
\end{eqnarray*}
Then the discriminant is
\begin{eqnarray*}
  `D_{T} &=& \DLoo + 4 z^{p+1}\\
  (1-z) `D_T &=& (1-z)  \DLoo + 4 z^{p+1}(1-z).
\end{eqnarray*}
In the interval $(0,1)$, $`D_\infty$ is decreasing (its derivative is negative) and
$(1-z) `D_T > (1-z) \DLoo$. Hence the root $`r_T$ of $`D_T$ is larger than the root
$`r_{\Loo}$ of $`D_\infty$, that is $`r_T >`r_{\Loo}$.  Beside:
\begin{displaymath}
  T = \frac{\sqrt{`D_T} - \sqrt{\DLoo}}{2z}.
\end{displaymath}
Hence the number of terms that do not have $\mathsf{M}$ as subterm is given by
\begin{displaymath}
  \Loo - T = \frac{(1-z) - \sqrt{`D_T}}{2z}.
\end{displaymath}
\begin{theo}
  The density in $\calLoo$ of terms that do not have $\mathsf{M}$ as subterm is $0$.
\end{theo}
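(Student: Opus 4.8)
The plan is to read the density off from a comparison of the exponential growth rates of $[z^n](\Loo-T)$ and $[z^n]\Loo$. By definition the quantity to compute is
\[
  \lim_{n\to\infty}\frac{[z^n](\Loo-T)}{[z^n]\Loo},
\]
and the denominator is already controlled by Theorem~\ref{theorem:approx}, which gives $[z^n]\Loo \sim (1/\rLoo)^n\,C\,n^{-3/2}$. So everything reduces to locating the dominant singularity of the explicit generating function $\Loo-T = \big((1-z)-\sqrt{`D_T}\big)/(2z)$ obtained just above, where $`D_T = \DLoo + 4z^{p+1}$.

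First I would identify the singularities of $\Loo-T$. The factor $1/(2z)$ is harmless: at $z=0$ the numerator $(1-z)-\sqrt{`D_T}$ vanishes (since $`D_T(0)=1$), so $z=0$ is removable and $\Loo-T$ is a genuine power series with nonnegative coefficients (it counts terms avoiding $\mathsf{M}$). On the positive axis the only candidate singularities are the branch point at the smallest positive root $`r_T$ of $`D_T$ and the pole of $\DLoo$ at $z=1$. I would note that $`D_T$ is positive on $[0,`r_T)$ — it equals $1$ at $0$, and $`r_T$ is by definition its first positive zero — so $\sqrt{`D_T}$ is analytic there; moreover $`r_T<1$, because $\DLoo\to-\infty$ as $z\to1^-$ (its numerator tends to $-4$ while $1-z\to0^+$), which forces $`D_T$ to change sign strictly before $1$. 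Hence, by Pringsheim's theorem applied to the nonnegative coefficient series, the radius of convergence of $\Loo-T$ is exactly $`r_T$.

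With the singularity located, two routes finish the argument. The quantitative route mimics the proof of Theorem~\ref{theorem:approx}: since $`r_T$ is a root of $`D_T$ one has, near $`r_T$, an expansion $\sqrt{`D_T}\sim c_0\sqrt{1-z/`r_T}$ of pure square-root type, and Theorem~VI.1 of \cite{flajolet08:_analy_combin} then yields $[z^n](\Loo-T)\sim c_1\,(1/`r_T)^n\,n^{-3/2}$ for an explicit constant $c_1$. Dividing by the asymptotics of $[z^n]\Loo$ gives
\[
  \frac{[z^n](\Loo-T)}{[z^n]\Loo}\sim \frac{c_1}{C}\left(\frac{\rLoo}{`r_T}\right)^{n},
\]
which tends to $0$ because $`r_T>\rLoo$. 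A lighter route avoids the transfer theorem entirely: the radius of convergence $`r_T$ gives $[z^n](\Loo-T)=O\big((1/`r_T+\varepsilon)^n\big)$ for every $\varepsilon>0$, while $[z^n]\Loo\ge c_2\,(1/\rLoo)^n n^{-3/2}$ for large $n$; choosing $\varepsilon<1/\rLoo-1/`r_T$ bounds the ratio by $\tfrac{1}{c_2}\,n^{3/2}\theta^n$ with $\theta<1$, which vanishes in the limit.

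The substantive content, and the only genuine obstacle, is the strict inequality $`r_T>\rLoo$; everything else is bookkeeping about where $\sqrt{`D_T}$ is analytic. That inequality is exactly what was established just before the statement: on $(0,1)$ the discriminant $\DLoo$ is decreasing and $(1-z)`D_T>(1-z)\DLoo$, so the first zero of $`D_T$ lies strictly to the right of $\rLoo$. The only point demanding a little care in the quantitative route is verifying that $`r_T$ is a simple zero of $`D_T$, so that the singularity is of pure square-root type and the same $n^{-3/2}$ scale appears; but this refinement is immaterial for the density-$0$ conclusion, which needs only the exponential gap between $1/`r_T$ and $1/\rLoo$.
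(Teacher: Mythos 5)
Your proposal is correct and follows essentially the same route as the paper: both arguments locate the dominant singularity of $\Loo - T$ at the root of its discriminant, invoke the inequality between that root and $\rLoo$ established just before the theorem, and conclude that the coefficient ratio decays geometrically. You supply more supporting detail than the paper does (Pringsheim, the removability at $z=0$, the remark about simplicity of the zero), but the underlying idea is identical.
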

\begin{proof}
  Indeed the smallest pole of $\Loo - T$ is $`r_T$ and the smallest pole of $\Loo$
  is~$`r_{\Loo}$. Therefore,
  \begin{eqnarray*}
    [z^n](\Loo - T) &\bowtie& \left(\frac{1}{`r_T}\right)^n\\%
    {[z^n]} \Loo &\bowtie&\left( \frac{1}{`r_{\Loo}}\right)^n
  \end{eqnarray*}
  Hence, since $`r_T >`r_{\Loo}$
  \begin{displaymath}
    \lim_{n"->"\infty} \frac{[z^n](\Loo - T)}{{[z^n]} \Loo} =
    \left(\frac{`r_{\Loo}}{`r_T}\right)^n = 0.
  \end{displaymath}
\end{proof}
For instance if $|t| = 9$, that is for instance if $t = `w = (`l(\zerodot\,
\zerodot))\,`l(\zerodot\, \zerodot)$, then \[`r_T \doteq 0.2956014673597697\] and
\begin{displaymath}
  \frac{`r_{\Loo}}{`r_T} \doteq 0.9999873991231537.
\end{displaymath}
\begin{corollary}
  The density in $\calLoo$ of terms that contain $\mathsf{M}$ as subterm is $1$.
\end{corollary}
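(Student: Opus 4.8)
The plan is to obtain this corollary as an immediate complement of the preceding theorem, exploiting the fact that the two subclasses partition $\calLoo$. First I would observe that every plain term either contains $\mathsf{M}$ as a subterm or it does not, and that these two alternatives are mutually exclusive. Hence the class $\mathcal{T}$ of terms containing $\mathsf{M}$ and the class counted by $\Loo - T$ (the terms avoiding $\mathsf{M}$) form a disjoint partition of $\calLoo$, so that for every $n$ their coefficient counts add up exactly:
\begin{displaymath}
  [z^n]T + [z^n](\Loo - T) = [z^n]\Loo.
\end{displaymath}

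Next I would divide this identity by $[z^n]\Loo$, which is permissible for all $n \ge 1$ since $[z^n]\Loo > 0$ there (indeed the coefficients grow like $(1/\rLoo)^n\, C / n^{3/2}$ by Theorem~\ref{theorem:approx}). This gives
\begin{displaymath}
  \frac{[z^n]T}{[z^n]\Loo} + \frac{[z^n](\Loo - T)}{[z^n]\Loo} = 1.
\end{displaymath}
Passing to the limit as $n \to \infty$, the theorem just proved states that the density of terms avoiding $\mathsf{M}$ vanishes, so the second summand tends to $0$. Therefore the first summand, which is exactly the density of $\mathcal{T}$ in $\calLoo$, converges to $1$.

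There is no genuine obstacle here: the entire argument amounts to the remark that the densities of two complementary subclasses sum to $1$. All the nontrivial content — namely that the avoiding class has density $0$ — has already been secured through the strict inequality $`r_T > `r_{\Loo}$ between the dominant singularities of $\Loo - T$ and of $\Loo$. If anything needs emphasis, it is merely that the partition is genuine and that $[z^n]\Loo$ is eventually nonzero, both of which are routine.
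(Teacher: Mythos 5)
Your proof is correct and matches the paper's (implicit) argument: the paper states this corollary without proof precisely because it is the immediate complement of the preceding theorem, the densities of the two complementary classes summing to $1$ term by term. Your added care about the limit existing (the second summand tends to $0$, hence the first tends to $1$) and about $[z^n]\Loo$ being eventually nonzero is a harmless tightening of the same route.
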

\begin{corollary}
  Asymptotically almost no $`l$-term is strongly normalizing.
\end{corollary}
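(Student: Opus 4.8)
The plan is to exhibit the set of strongly normalizing terms as a subset of a set whose density has already been determined, and then close the argument by squeezing coefficients. Concretely, I would instantiate the preceding Theorem (the density in $\calLoo$ of terms that do not contain $\mathsf{M}$ as subterm is $0$) with the single fixed term $\mathsf{M} = `w = (`l(\zerodot\,\zerodot))\,`l(\zerodot\,\zerodot)$ of size $9$. For this choice the worked example above already records $`r_T \doteq 0.2956014673597697 > `r_{\Loo}$ and $`r_{\Loo}/`r_T \doteq 0.9999873991231537 < 1$, so the Theorem applies verbatim and tells us that the density in $\calLoo$ of $`l$-terms avoiding $`w$ as a subterm is $0$.

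The crucial step I would establish is the inclusion: every strongly normalizing term avoids $`w$ as a subterm; equivalently, any term $M$ that \emph{does} contain $`w$ is not strongly normalizing. This holds because $`w$ is a redex whose $`b$-contraction reproduces $`w$, namely $`w \to_{`b} `w$. Writing $M = C[`w]$ for a one-hole context $C$, the reduction performed entirely inside the $`w$ occurrence is a legitimate $`b$-step of the whole term and yields $C[`w] \to_{`b} C[`w] = M$; hence $M$ admits the infinite reduction $M \to_{`b} M \to_{`b} \cdots$ and is not strongly normalizing. A point requiring care is that ``contains $`w$ as a subterm'' is unambiguous in the de Bruijn presentation precisely because $`w$ is closed: it carries no free indices and therefore embeds verbatim into any surrounding term without its indices being shifted, so the notion of subterm underlying the Theorem is exactly the one needed here.

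With the inclusion in hand the conclusion is immediate. Let $\mathit{SN}_n$ denote the number of strongly normalizing terms of size $n$. Every such term is counted among the size-$n$ terms that do not contain $`w$, so $\mathit{SN}_n \le [z^n](\Loo - T)$ for the choice $\mathsf{M} = `w$, and therefore
\begin{displaymath}
  0 \;\le\; \frac{\mathit{SN}_n}{[z^n]\Loo} \;\le\; \frac{[z^n](\Loo - T)}{[z^n]\Loo},
\end{displaymath}
where the right-hand side tends to $0$ as $n "->" \infty$ by the preceding Theorem. Thus the density of strongly normalizing terms in $\calLoo$ is $0$, i.e.\ asymptotically almost no $`l$-term is strongly normalizing.

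The only place where anything beyond bookkeeping is needed is the rewriting-versus-combinatorics bridge of the second paragraph: that subterm containment of the fixed non-normalizing term $`w$ genuinely forces non-strong-normalization, together with the remark that closedness of $`w$ makes the de Bruijn subterm relation behave as the Theorem requires. Everything else is a direct invocation of the already-established density-$0$ result and an elementary monotonicity estimate on coefficients.
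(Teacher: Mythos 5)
Your proposal is correct and follows essentially the same route as the paper: instantiate the subterm-density theorem with $\mathsf{M} = `w$, note that any term containing $`w$ admits the infinite reduction $C[`w] "->" C[`w] "->" \cdots$ and hence is not strongly normalizing, and conclude that the strongly normalizing terms have density $0$. The paper's own proof is just a terser version of this argument; your explicit justification of the bridge (that $`w "->"_{`b} `w$ forces non-termination of any containing term, and that closedness of $`w$ makes the de Bruijn subterm relation unproblematic) is left implicit there.
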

\begin{proof}
  In other words, \emph{the density of strongly normalizing terms is $0$.}  Indeed,
  the density in $\calLoo$ of terms that contain $(`l(\zerodot\,
  \zerodot))\,`l(\zerodot\, \zerodot)$ is $1$. Hence the density of non strongly
  normalizing terms is $1$. Hence the density of strongly normalizing terms is $0$.
\end{proof}

\section{Conclusion}
\label{sec:concl}

Figure~\ref{fig:summary} summarizes what we obtained on densities of terms.

\SummaryTable

Moreover, this research opens many issues, among others about generating random terms
and random normal forms using Boltzmann samplers~\cite{lescanne14:_bolzm}.

\appendix

\section{De Bruijn notations}
\label{sec:deBruijn}
De Bruijn indices are a system of notations for bound variables due to Nikolaas de
Bruijn and somewhat connected to those proposed by Bourbaki~\cite{bourbaki04:_theor_sets}.  The goal is to
replace bound variables by placeholders and to link each bound variable to its
binder.  For instance (see Figure~\ref{fig:bourbaki}) Bourbaki
(\cite{bourbaki04:_theor_sets} p. 20) proposes to represent placeholders by boxes
$\Box$ and to represent binds by drawn lines. This requires a two dimensional
notation. For example, he considers the formula:
\begin{displaymath}
  (`t x)\, \neg (x`:A') \vee x`:A''
\end{displaymath}
Notice that we use an infix notation whereas he uses a prefix notation which gives
$`t\vee \neg `: x A' `: x A$.  The formula contains the binder $`t$ (a binder that
Bourbaki introduces) and two occurrences of the bound variable $x$, this involves two
$\Box$'s and two drawn lines from~$`t$, namely to the first $\Box$ and to the second
$\Box$.  De Bruijn proposes to represent the placeholders (in other words the
variables) by natural numbers which represent the length of the link, that is the
number of binders crossed when reaching the actual binder of the variables. In our
proposal, we write natural numbers using the functions \emph{zero}~$\zerodot$ and
\emph{successor}~$S$.  For instance, $3$ is written $S S S \zerodot$.  With de Bruijn
notations, Bourbaki's formula is written:
\begin{displaymath}
  `t\, (\neg \zerodot `: A') \vee \zerodot`: A''
\end{displaymath}
and the lambda terms $`l x.`ly.`l z. (x z) (y z)$ is written $`l`l`l (((S S \zerodot)
\zerodot) ((S \zerodot) \zerodot))$ which would correspond to the drawing of
Figure~\ref{fig:BourbakiS} in Bourbaki style.
\begin{figure}[!tp]
  \centering
  \includegraphics[width=0.15\columnwidth]{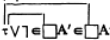}
  \caption{Bourbaki's notations for formula $`t\vee \neg `: x A' `: x A$.}
  \label{fig:bourbaki}
\end{figure}
\begin{figure}[!tp]
  \vspace*{25pt}
  \begin{displaymath}
    \xymatrix@=.1pc{{`l}\ar@[blue]@/_.1pc/@{-}@(u,u) [rrr]& `l\ar@[green]@{-}@(u,u) [rrrr]& %
      `l \ar@[red]@{-}@(u,u) [rr] \ar@{-}@(u,u) [rrrr] & (\Box & \Box) & (\Box & \Box)}
  \end{displaymath}
  \vspace*{-25pt}
  \caption{\textsf{S} in Bourbaki style}\label{fig:BourbakiS}
\end{figure}
\section{Another natural counting of lambda terms}
\label{sec:zeroforzerodot}

Another natural counting is a counting where:
  \begin{eqnarray*}
    |`l M| &=& |M| + 1\\
    |M_1\,M_2|&=& |M_1| + |M_2| + 1\\
    |S n| &=& |n| + 1\\
    |\zerodot| &=& 0.
  \end{eqnarray*}
The generating function is solution of
\begin{displaymath}
  z \Moo^2 - (1-z) \Moo + \frac{1}{1-z} = 0 
\end{displaymath}
with discriminant
\begin{eqnarray*}
 `D_{\Moo}  &=& (1 - z)^2 - 4\frac{z}{1-z}\\
  &=& \frac{(1 - z)^3 - 4 z}{1-z}\\
  &=& \frac{1 - 7z + 3z^2 - z^3}{1-z}
\end{eqnarray*}
and with root closest to $0$: $`r_{\Moo} \doteq 0.152292401860433 $ and $1/`r_{\Moo} =
6.5663157700831193$.
The first values are:
\begin{center}
  1, 3, 10, 40, 181, 884, 4539, 24142, 131821,
 734577, 4160626 % 23881695,138610418,
\end{center}
This sequence is \textbf{A258973} in the \emph{Online Encyclopedia of Integer
  Sequences} and grows significantly faster than \textbf{A105633}.

 \bigskip
\end{document}
% ============================================================
\begin{eqnarray*}
  z T^2 + (1 - 2z \Loo - z)T - z ^4 &=& 0
\end{eqnarray*}
Notice that 
\begin{eqnarray*}
  1 - 2z \Loo - z &=& \sqrt{\DLoo}
\end{eqnarray*}
Then the  discriminant is
\begin{eqnarray*}
  `D_T &=& \DLoo + 4 z^5\\
(1 - z) `D_T &=&  1 - 3z - z^2 - z^3 + 4 z^5 - 4 z^6
\end{eqnarray*}
its smallest root is $`r_T \doteq 0.2972902998168387$, which is larger than $`r_{\Loo}$.  The value of $T$ is
\begin{displaymath}
  T = \frac{\sqrt{`D_T} - \sqrt{\DLoo}}{2z}.
\end{displaymath}
Hence of pole of $T$ is this of $\Loo$.

\subsection{$(`l x. x x)(`l x. x x)$}
\label{sec:lxxlxx}

The generating function $K$ is
\begin{eqnarray*}
  z K^2 + (1 - 2z \Loo - z)K - z ^9 &=& 0
\end{eqnarray*}
with discriminant:
\begin{eqnarray*}
  `D_K &=& \DLoo + 4 z^{10}\\
(1 - z) `D_K &=&  1 - 3z - z^2 - z^3 + 4 z^{10} - 4 z^{11}
\end{eqnarray*}
its root is $`r_K \doteq 0.29560146735968884$.

\subsection{A term of size $p$ which itself contains $(`l x. x x)(`l x. x x)$}
\label{sec:cont_lxxlxx}

\subsection{$S^n \zerodot\, S^n\zerodot$}
\label{sec:snsn0}

Let us now count the terms that contain a term of the form $S^n \zerodot \, S^n
\zerodot$.  Terms $S^n \zerodot\, S^n\zerodot$ have size $2n+3$ and the generating
function associated with the terms of the form $S^n \zerodot \, S^n \zerodot$ is
\begin{displaymath}
  \frac{z^3}{1-z^2}
\end{displaymath}
Therefore the generating function we are looking for is given by
\begin{displaymath}
  G =  \frac{z^3}{1-z^2} + z G + 2z L G - z G^2
\end{displaymath}
that is
\begin{displaymath}
  z G^2 + (1 -z - 2 z \Loo)G - \frac{z^3}{1-z^2} = 0
\end{displaymath}
hence the discriminant is
\begin{eqnarray*}
  `D_G &=& \DLoo + \frac{4 z^4}{1-z^2}\\
(1-z^2)`D_G &=&  1 - 2z - 4 z^2 - 2 z^3 + 3 z^4\\
`r_G &\doteq& 0.3022268823874713\\
\frac{1}{`r_G} &\doteq& 3.3087725092500064
\end{eqnarray*}
% ================================================================================
\begin{figure}[!tp]
  \centering
  \begin{math}
    \begin{array}{c|c}
    \textrm{\bf black white binary trees} &  \textrm{\bf zigzag free trees}
      \\\hline
      \xymatrix@=.2pc{&&{\bullet}\ar@{-}[dl]\\ &{\bullet}\ar@{-}[dl]\\{\bullet}} &
      \xymatrix@=.2pc{{`*}\ar@{-}[dr]\\&{`*}\ar@{-}[dr]\\&&{`*}}\\ \hline %
      \xymatrix@=.2pc{&&{\bullet}\ar@{-}[dl]\\ &{\bullet}\ar@{-}[dl]\\{`(?)}} & %
      \xymatrix@=.2pc{{`*}\ar@{-}[dr]\\&{`*}\ar@{-}[dl]\\{`*}} \\ \hline % 
      \xymatrix@=.2pc{&&{\bullet}\ar@{-}[dl]\\ &{`(?)}\ar@{-}[dl]\\{`(?)}} & %
      \xymatrix@=.2pc{&&{`*}\ar@{-}[dl]\\ &{`*}\ar@{-}[dl]\\{`*}} \\ \hline% 
      \xymatrix@=.2pc{&{\bullet}\ar@{-}[dl]\\
        {`(?)}\ar@{-}[dr]\\&{`(!)}}  & %
\xymatrix@=.2pc{&{`*}\ar@{-}[dl]\ar@{-}[dr]\\{`*}&&{`*}}  % <- 
      \\ \hline
    \end{array}
  \end{math}
  \caption{Bijection between black white trees and and zigzag free terms of
    size $3$.}
\label{fig:bijbwbz3}
\end{figure}

\begin{figure}[!tp]
  \centering
  \begin{math}
    \begin{array}{c|c}
      \textrm{\bf black white trees} &  \textrm{\bf zigzag free trees} 
      \\\hline
      \xymatrix@=.2pc{&&& {\bullet}\ar@{-}[dl]\\ && {\bullet}\ar@{-}[dl] \\ &
        {\bullet}\ar@{-}[dl] \\ {\bullet}} &%
      \xymatrix@=.2pc{{`*}\ar@{-}[dr]\\ & {`*}\ar@{-}[dr] \\&& {`*}\ar@{-}[dr] \\ &&&{`*}} \\\hline%
      \xymatrix@=.2pc{&&& {`(!)}\ar@{-}[dl]\\ && {`(!)}\ar@{-}[dl] \\ & {`(!)}\ar@{-}[dl] \\ {\circ}}& %
      \xymatrix@=.2pc{ {`*}\ar@{-}[dr]\\ &{`*}\ar@{-}[dr]\\&&`*\ar@{-}[dl]\\&`*}%
      \\\hline 
      \xymatrix@=.2pc{&&& {`(!)}\ar@{-}[dl]\\ && {`(!)}\ar@{-}[dl] \\ & {\circ}\ar@{-}[dl] \\ {\circ}} & %
      \xymatrix@=.2pc{&{`*}\ar@{-}[dr]\\  &&{`*}\ar@{-}[dl] \\& {`*}\ar@{-}[dl] \\
        {`*}} %
      \\\hline % 
      \xymatrix@=.2pc{&&& {`(!)}\ar@{-}[dl]\\ && {\circ}\ar@{-}[dl] \\ &
        {\circ}\ar@{-}[dl] \\ {\circ}} & %
       \xymatrix@=.2pc{&&& {`*}\ar@{-}[dl]\\ && {`*}\ar@{-}[dl] \\ &
        {`*}\ar@{-}[dl] \\ {`*}}  
      \\ \hline % 
      \xymatrix@=.2pc{&& {`(!)}\ar@{-}[dl]\\ & {`(!)}\ar@{-}[dl] \\ 
        {\circ}\ar@{-}[dr] \\ &{`(!)}} &
         \xymatrix@=.2pc{&{`*}\ar@{-}[dr]\\ &&{`*}\ar@{-}[dl]\ar@{-}[dr]\\&{`*}&&{`*}} %
      \\\hline%
    \end{array}
    \qquad
    \begin{array}{c|c}
      \textrm{\bf black white trees} &  \textrm{\bf zigzag free trees} 
      \\\hline
      \xymatrix@=.2pc{&& {`(!)}\ar@{-}[dl]\\ & {\circ}\ar@{-}[dl] \\ 
        {\circ}\ar@{-}[dr] \\ &{`(!)}} & %
   \xymatrix@=.2pc{&&{`*}\ar@{-}[dl]\\  &{`*}\ar@{-}[dr] \ar@{-}[dl] \\ {`*}&&{`*}} %
      \\\hline%
      \xymatrix@=.2pc{& {`(!)}\ar@{-}[dl]\\  {\circ}\ar@{-}[dr] \\ 
        & {`(!)}\ar@{-}[dl] \\ {\circ}}&
      \xymatrix@=.2pc{&{`*}\ar@{-}[dl]\ar@{-}[dr]\\  {`*}&&{`*}\ar@{-}[dl]\\ &{`*}}  % 
      \\\hline%
      \xymatrix@=.2pc{& {`(!)}\ar@{-}[dl]\\  {\circ}\ar@{-}[dr] \\ 
        & {`(!)}\ar@{-}[dl] \\ {`(!)}}&%
      \xymatrix@=.2pc{&{`*}\ar@{-}[dl]\ar@{-}[dr]\\{`*}&&{`*}\ar@{-}[dr]\\ &&&{`*}} % <-
      \\\hline%
      \xymatrix@=.2pc{&& {`(!)}\ar@{-}[dl]\\ & {\circ}\ar@{-}[dl]\ar@{-}[dr] \\ 
        {\circ}&&  {`(!)}}&%
      \xymatrix@=.2pc{&&{`*}\ar@{-}[dl]\ar@{-}[dr]\\  &{`*}\ar@{-}[dl]&&{`*}\\ {`*}}  %
      \\\hline%
    \end{array}
  \end{math}
  \caption{Bijection between black white trees and zigzag free trees of size $4$.}
\label{fig:bijbzbw4}
\end{figure}
%%% Local Variables:
%%% mode: latex
%%% mode: flyspell
%%% ispell-dictionary: "en"
%%% mode: reftex
%%% mode: auto-fill
%%% reftex-default-bibliography: ("natural_counting")
%%% ispell-dictionary: "en"
%%% fill-column: 85
%%% TeX-master: t
%%% TeX-PDF-mode: t
%%% End:
%% -- command to be run with Other
%%  pdflatex -shell-escape natural_counting.tex 

%  LocalWords:  de Bruijn functionals combinatorics unranking subterms subterm gfun
%  LocalWords:  subexponential combinatory combinatorial combinatorists Jue Bodini
% LocalWords:  bijection Haskell arity subtree Sapounakis et al pc dl dr lS ud hnf
% LocalWords:  Dyck uudu nn holonomic bijections fixpoint drr BW haskell BZ BwToBz